\newtheorem{theorem}{Theorem} 
\newtheorem{proposition}{Proposition}
\begin{document}
\begin{center}  {\Large \bf  Modeling the effects of information--dependent vaccination behavior on meningitis transmission}
\end{center}
\smallskip
\begin{center}
{\small \textsc{Bruno Buonomo\footnote{buonomo@unina.it}}, \textsc{Alberto d'Onofrio\footnote{alberto.donofrio@i-pri.org}},
 \textsc{Semu Mitiku Kassa\footnote{semu.mitiku@aau.edu.et}},
 \textsc{Yetwale Hailu Workineh\footnote{yetwale@yahoo.com}}}
\end{center}
\begin{center} {\small \sl $^1$ Department of Mathematics and Applications, University of Naples Federico II  via Cintia, I-80126\\ Naples, Italy}.
 \\
 {\small \sl $^{2}$ International Prevention Research Institute, 95 cours Lafayette, 69006 Lyon, France}\\
{\small \sl $^{3}$ Department of Mathematics and Statistical Sciences, Botswana International Science and Technology (BIUST), Palapye, Botswana}.\\
{\small \sl $^{3, 4}$ 
Department of Mathematics, Addis Ababa University, Addis Ababa, Ethiopia}.
\end{center}

\medskip

{\centerline{\bf Abstract} 
\begin{quote}
\small We propose a mathematical model to investigate the effects of information--dependent vaccination behavior on meningitis  transmission. The information is represented by means of information index as early proposed in (d'Onofrio et al., Theor. pop. biol., 2007). We perform a qualitative analysis based on stability theory, focusing to the global stability of the disease free equilibrium (DFE) and the related transcritical bifurcation taking place at the threshold for the DFE.\\
Finally, we assess the role of epidemiological and information parameters in the model dynamics through numerical simulations. 
Our simulations suggests that the impact of the human behavior critically depend on the average information delay. For example, it can induce recurrent epidemics, provided that transfer rate from the carrier to the infectious state is over a threshold. Otherwise, the endemic equilibrium is (at least) locally stable.
\end{quote}}

\noindent { {\bf Keywords:} Epidemic model, Meningitis, Vaccination, Information.}

\normalsize

\section{Introduction}
\label{sec:intro}

Meningitis is an inflammation of the meninges, the protective membranes that surrounds the brain and spinal cord. Meningitis is mainly  caused by bacteria or viruses, although it may be rarely caused also by fungi, parasites or free-living microscopic ameba \cite{CDC}. Viral meningitis is more common  than bacterial meningitis but often less severe. This last can be deadly if not treated right away \cite{CDC}.

Meningococcal disease (MD) is a serious bacterial form caused by Neisseria meningitidis bacteria, which has the potential to cause large epidemics. It can cause severe brain damage and is associated with high fatality: it is fatal in 50\% of cases if untreated. This form of meningitis is observed worldwide but the highest burden of the disease is in the `meningitis belt', an area  of sub-Saharan Africa that stretches from Senegal in the west to Ethiopia in the east, where around 30 000 cases are reported each year \cite{WHO}.

Neisseria meningitidis bacteria are transmitted from person-to-person through respiratory droplets or throat secretions from carriers. The transmission of N. meningitidis is facilitated during mass gatherings (as pilgrimages and jamborees) \cite{WHO}. Most cases are acquired through exposure to asymptomatic carriers and relatively few through direct contact with patients with MD \cite{B2005}.  

There are twelve different types of N. meningitidis, called serogroups, that have been identified. Only six of them (A, B, C, W, X and Y) can cause epidemics \cite{WHO}. The  N. meningitidis serogroup A (NmA) is the most commonly isolated pathogen in the African meningitis belt \cite{FPS2016,TG2009, Wireko2012}, where it has historically accounted for 90\% of MD cases and the majority of large-scale epidemics \cite{CDC}. 

Mathematical modeling have the potential, and offer a promising way, to design effective prevention and control strategies. 
An early approach to modeling the spread of meningitis is due to Pinner et al.  in \cite{Pinner1992} where the weekly \emph{attack rate} (i.e. the number of individuals who get infected divided by the number of people at risk for the infection) was estimated for  both the vaccinated and non--vaccinated populations during an epidemic of NmA in Nairobi, Kenya, in 1989.

The age structure of population has been often considered as an important modeling aspect in models of meningococcal infection spread.
Martcheva and Cornell \cite{MC2003} used an age-structured meningitis model to identify the contribution of the carriers (infected who have the bacteria but appear healthy) to the transmission of MD. Tuckwell et al. \cite{THV2003} proposed a discrete time model including age-dependent rates to determine the impact of vaccination  schedule on the time course of a meningococcal serogroup C (NmC) disease in France. An age--structured model was developed by Trotter et al. \cite{TGE2005} and then fitted to data on immunization with NmC conjugate vaccines in England and Wales to investigate the effects of a conjugate vaccine program. Deterministic compartmental models were fitted to age structured data sets of MD by Coen et al. \cite{CCS2000}. 

In recent years, mathematical models for MD based on the Susceptible--Carrier--Infectious--Recovered (SCIR) structure have received much attention by scholars. Simpson and Roberts \cite{SR2012} used a SCIR model to estimate the long--term effects of the 2004 vaccination campaign on the epidemic of meningococcal B disease in New Zealand. A SCIR model  developed by Irving, Blyuss and coworkers \cite{Blyuss2016,IBC2012} (where they also developed some other SCIR-related models) suggested that temporary population immunity is an important factor to include in models designed to attempt to predict meningitis epidemics and to measure the efficiency of vaccines being deployed. G. Djatcha Yaleu et al. \cite{Yaleu2017} considered vaccination in a SCIR model for the dynamical transmission of NmA. They conclude that the control of the epidemic of NmA pass through a combination of a large coverage vaccination of young susceptible individuals and the production of a vaccine with a high level of efficacy. However, they used a set of parameters that are highly unrealistic. 

An important factor which has not been included in meningitis models, as far as we know, is the change of social behavior of individuals 
as consequence of information and rumors on the spread of the disease within their community \cite{MD2013,WBB2016}. This phenomenon is a key factor regulating the propensity of individuals in adopting non--mandatory protective measures. Well known examples are the vaccination for childhood diseases like measles, mumps and rubella \cite{WBB2016} and bed--nets for mosquito--borne diseases \cite{asbbsk}. Also in the case of meningitis, there is evidence that the perceptions of individuals about the disease affect their decisions regarding therapeutic and preventive interventions \cite{coetal}. This, in turn, influence disease occurrence, morbidity and mortality. Specific studies regarding vaccination against Meningococcal C revealed that parents' perceived vulnerability and perceived control in preventing the infection seem to influence parents' evaluation of the vaccination programme. Therefore,  these perceptions may, in principle, affect vaccination behavior and could be relevant to be taken into account when educating population about vaccination \cite{tietal}. Another interesting investigation is the recent study focused on evaluating the knowledge and attitudes about Meningococcal B and the relative vaccine for children among a sample of parents in Italy \cite{MNAD2017} (see also a similar survey in Poland \cite{dretal}). One of the conclusions, only apparently trivial, was that people who knew that the vaccine was a preventive measure of meningitis were more likely to have a positive attitude towards vaccination. On the other hand, in spite of the great successes obtained by immunization programmes (as the long-term strategies in the African meningitis belt with vaccine \emph{MenAfriVac}\textregistered$\,$  against NmA meningitis \cite{Lafoetal}) vaccine hesitancy threatens to erode these gains \cite{coopetal,Lafoetal} and push health policy makers to plan specific interventions \cite{ECDC}.

The feedbacks of human behavior on the transmission and control of infections is the main focus of the recently emerging research field of behavioral epidemiology (BE) of infectious diseases \cite{MD2013,WBB2016}. Indeed, in classical epidemic models \cite{anma, cap} individuals are modeled as passive particles randomly interacting according to the mass-action principle of statistical physics \cite{WBB2016}, as they were reacting molecules of chemistry. This paradigm does not account the role of human decisions in influencing the spread and the control of infectious diseases, which are instead explicitly modeled in the framework of BE \cite{vewibe}.

Vaccines for meningitis like \emph{MenAfriVac} are given in mass for ages usually between 1 and 29 \cite{LRD2009,SYG2018}. A possible way to modeling behavior--dependent vaccination at all ages relies on adopting  the information--based  approach,  which focuses on the case where the driving force of the vaccine demand is represented by the time changes in the perceived risk of infection. This approach has been introduced in \cite{DMS2007,DMS2007b} by mean of a phenomenological model where the vaccination rates of newborn is an increasing function of the information acquired on the spread of the infectious disease. Main applications concerned vaccines targeted at childhood infectious diseases such as measles, mumps and pertussis \cite{BDL2008,BDL2013,DMS2007,DMS2007b,MD2013,WBB2016}. Only few applications, as far as we know, concern with the case of vaccination at all ages  \cite{BDL2013,bbrdm,DMS2007b,kumar}.

In this paper, motivated by the SCIR model developed in \cite{IBC2012} and its variants \cite{Yaleu2017,SR2012}, we propose a BCM mathematical model of meningitis transmission, including information--dependent vaccination at all ages. The main aim is to theoretically assess, through the proposed model, how the information may affect the dynamics of meningitis transmission within a given population. We will make use of analytical methods for dynamical systems, like stability and bifurcation theory. The theoretical evaluation of the effects of human decision on the disease control will be complemented by means of numerical simulations.

The paper is organized as follows: The model is presented in Section \ref{sec: model}. In Section \ref{sec: basic}, the basic properties of the model are given as well as the determination of the basic reproduction number. We also show that an endemic equilibrium is possible and it is unique. In Section \ref{sec: stability} we provide a stability analysis of the disease free equilibrium and show, through bifurcation analysis, that the endemic equilibrium is stable, at least when the basic reproduction number is close to its critical value. In Section \ref{sec:sens} numerical simulations are illustrated. Conclusions are given in  Section \ref{sec: concl}.

\section{The model}
\label{sec: model}

We assume that a given population may be divided into five distinct compartments: the susceptible individuals $(S)$, who are susceptible to infection; the vaccinated individuals $(V)$; the carriers $(C),$ who are carrying the infection and are infectious, but show no signs of the invasive disease; the infectious  $(I)$, who have been infected by the disease and become immediately infectious; the recovered  individuals $(R)$. The total population is denoted by $N$. Therefore, at time $t$ it is 
$$
N(t) = S (t) + V (t) + C (t) + I (t) + R (t).
$$
Individuals are assumed to be recruited into the population at a constant rate $\Lambda$. A mass vaccination for all susceptible individuals is given with rate $\nu$, so that $\nu S$ individuals are transferred to the vaccinated group $V$. We assume that  vaccination is lifelong and does not have a waning effect.\\
After making an effective contact with carriers or infectious, the susceptible individuals become carriers. The force of infection \cite{cap} of the disease, denoted as $\lambda$, will be described later. Carriers becomes infectious at a rate $\sigma$ and recover at a rate $\delta$. Infectious individuals recover at a rate $\rho.$  The model takes into account both the natural death rate $\mu$ and disease-induced mortality rate $d$. Recovery from the disease is not permanent, so we assume that the recovered individuals become susceptible again at rate $\phi$. 

Therefore, the baseline model is  governed by the following system of non linear ordinary differential equations:
\begin{equation}
\label{system1}
\begin{array}{l}
 \dot S = \Lambda-\lambda(N) S -\nu S+\phi R-\mu S\\
 \dot V = \nu S-\mu V\\
 \dot C = \lambda(N) S-(\sigma+\delta+\mu)C\\
  \dot I =\sigma C-(\rho+\mu +d)I\\
 \dot R =\delta C+\rho I-(\phi+ \mu)R,
  \end{array}
\end{equation}
where the upper dots denote the time derivatives. As for the force of infection $\lambda(N)$, due to the asymptomatic nature of carriers, their contact rate is supposed to be greater than that of infectious, which are subject to identification and hospitalization. Therefore 
the level of infectiousness of $C$ is greater than $I$, say, by  $\epsilon>1$ factor \cite{ANS2018}. As a consequence, the force of infection is given by 
\begin{equation}
\label{foi}
\lambda(N)=\beta\dfrac{\epsilon C+I}{N},
\end{equation}
where $\epsilon\geq 1$, and $\beta$ is the transmission rate of infectious. \\
Note that if $d=0$ (and, of course, in absence of the disease) then the steady state value of the population is $\tilde N = \Lambda/\mu$, since the dynamics of $N(t)$ is ruled by
\begin{equation}
\label{eqNnod}
\dot N=\Lambda-\mu N.
\end{equation}

Now, we will build a behavioral variant of model (\ref{system1}) by introducing an information--dependent vaccination. Following the approach of information-dependent models \cite{DMS2007,WBB2016} we assume that the vaccination rate $\nu$ is an increasing function of an information index $M$, which summarizes the information about the current and past values of the disease \cite{DMS2007,WBB2016}:
\begin{align}
\label{M}
M(t)=\int_{-\infty}^{t} g(C(\tau),I(\tau))K_{a}(t-\tau) d\tau.
\end{align}
Here the function $g$ describes the role played by the infectious size in the information dynamics and is assumed to be dependent on the prevalence, i.e. on the compartments $C$ and $I$ (the basic properties of $g$ will be specified later). The term  $K_{a}$ is the delay kernel function, which represents the weight given to past prevalence. The information variable $M$ is described by a distributed delay for two main reasons. On one hand, people may have memory of past epidemics: indeed vaccine--related decision seldom depends on the current state of the spread of the disease. On the other hand, a delay in people awareness may be consequence of the time-consuming routine procedures (like clinical tests, notification of cases, reporting delays to public health authorities, etc.) that precede the dissemination of information on the disease's status in the community.

Combining \eqref{system1} and \eqref{M},  model (\ref{system1}) is modified in the following  nonlinear integro--differential system :
\begin{equation}\label{system2}
\begin{array}{l}
\dot S = \Lambda-\lambda(N) S -\nu (M)S+\phi R-\mu S\\
 \dot V = \nu(M) S-\mu V\\
 \dot C = \lambda(N) S-(\sigma+\delta+\mu)C\\
\dot I =\sigma C-(\rho+\mu +d)I\\
 \dot R =\delta C+\rho I-(\phi+ \mu)R\\
 \hspace{-0.5cm} M(t) =\int_{-\infty}^{t} g(C(\tau),I(\tau))K_{a}(t-\tau) d\tau,
  \end{array}
\end{equation}
where $\lambda(N)$ is given in (\ref{foi}) and the function $\nu(M)$ models the information-dependent rate of vaccinations, and it may be  split as follows:
\begin{equation}\label{nuM}
\nu(M)=\nu_{0}+\nu_{1}(M),
\end{equation}
where $\nu_{0}$ is a positive constant representing the fraction of susceptibles that are vaccinated independently on the available current and historical information on the prevalence level of the disease and $\nu_{1}(M)$ models the fraction of susceptibles that are vaccinated depending on the social alarm caused by the disease \cite{DMS2007}.

The following basic assumptions are done regarding the functions $g(C,I)$ and $\nu_1(M)$: \emph{(i)} $g(C,I)$ and $\nu_{1}$ are continuous and differentiable, except in some cases, at finite number of points;  \emph{(ii)} $g(0,0) = 0$, and $\nu_{1}(0) = 0$;
\emph{(iii)} $g(C,I)$ and $\nu_{1}$ are positive for any positive value of their respective arguments; \emph{(iv)} $0 < \partial g/\partial C<c_{1}$, $0< \partial g/\partial I <c_{2}$, and $0 <\nu'_{1}(M)<c_{3}$, for some constants $c_{1}$, $c_2$ and $c_{3}$.

The simplest specific functional form of $\nu_{1}(M)$ is the linear function, 
\begin{equation}\label{nu1}
\nu_{1}(M)=b\,M, 
\end{equation}
where $b$ is a positive constant. Moreover, since the possibility of getting  information from carriers group $C$ is almost not possible, we take $g$ as a function of only the infected groups $I$. Again, the simplest form is:
\begin{equation}\label{glin}
g(I)=h\,\frac{I}{\tilde{N}}=k I.
\end{equation}
Note that the size of the infectious compartment is normalized by divinding $I$ by the reference steady state population. 
In this case, the positive constant parameter $h$ can be interpreted as the \emph{information coverage}. This may be seen as a `summary' of two contrasting phenomena:  the phenomenon of disease under--reporting and the level of media and rumours coverage of the status of the disease, which tends to amplify the social alarm \cite{BDL2008}.

As far as the kernel $K_{a}(t)$ is concerned, a relevant example is the weak exponential delay kernel \cite{MM2008}
\begin{equation}\label{expker}
K_{a}(t) = a\,e^{-at},
\end{equation}
where the parameter $a$ assumes the biological meaning of inverse of the average delay of the collected information on the disease, as well as the average length of the historical memory concerning the disease in study. The Kernel (\ref{expker}) is a particular case of 
Erlangian distribution. This choice has the advantage of making the corresponding integro-differential system reducible
into ordinary differential equations through the \emph{linear chain trick} \cite{MM2008}.
Model (\ref{system2})--(\ref{foi})--(\ref{M}), with choices (\ref{nuM}), (\ref{nu1}), (\ref{glin}) and (\ref{expker}), may be written:

\begin{equation}\label{system4}
\begin{array}{l}
 \dot S=  \Lambda-\lambda(N) S -(\nu_{0}+bM)S+\phi R-\mu S\\
 \dot V=  (\nu_{0}+bM )S-\mu V\\
 \dot C=  \lambda(N) S-(\sigma+\delta+\mu)C\\
\dot I= \sigma C-(\rho+\mu +d)I\\
 \dot R= \delta C+\rho I-(\phi+ \mu)R\\
  \dot M= a\, \left(kI-M\right),
\end{array}
\end{equation}
and will be the subject of the investigation in this paper. The  parameters, their descriptions and the baseline values are given in  Table \ref{parameters2}.

\begin{table}[t]
\resizebox{\textwidth}{!}{%
\begin{tabular}{|l|l|l|l|l|}
\hline{\small Parameter}  & {\small Description} & {\small Value (range)}& {\small Unit}& {\small Source and comments}\\
\hline
$\mu$  & {\small Per--capita natural mortality rate for all classes} & 0.027397 & $years^{-1}$&Life expectancy is $50$ years \cite{IBC2012}\\
$\tilde{N}$  & 
{\small Steady state Population} &1000000 & adim. & Assumed\\
$\Lambda$  & 
{\small Inflow rate of susceptible individuals} & 20000 & $years^{-1}$ & $=\mu \tilde{N}$\\
$\beta$  & {\small Transmission Rate} & 100 $(50,200)$ & $years^{-1}$& \cite{IBC2012}\\
$\rho$  & {\small Rate of recovery of infectious individuals} & 52 &$years^{-1}$&Disease duration $\approx 1$ week  \cite{IBC2012}\\
$d$  & {\small Disease--induced death rate} &5.2 & $years^{-1}$ & Mortality probability $\approx 0.1$ \cite{IBC2012}\\
$\sigma$  & {\small Rate moving from carrier to infected} &26 (0.1,52)& $years^{-1}$ &\cite{IBC2012}\\
$\delta$  & {\small Recovery rate of carriers} &26 $(0.1,52)$ & $years^{-1}$ &\cite{IBC2012}\\
$\phi$ & {\small Rate of loss of immunity} & 1 $(0.04,2)$ & $years^{-1}$&\cite{IBC2012}\\
$\nu_{0}$ & {\small behavior--independent vaccination rate} &0.04 & $years^{-1}$ &Assumed\\
$T$ & {\small Average information delay}  & $[0,120]$ & $days$ & From no delay to 4 months about. \\
$a$ & {\small Inverse of average information delay}  & $=365.25/T_{deal}$ & $years^{-1}$ & Inverse of $T$ in $years^{-1}$\\
$b$ & {\small Slope of vaccination rate w.r.t. information index $M$} &2000 $(1,\infty)$ & $years^{-1}$  &Assumed\\
$h$ & {\small Information coverage}  & 0.5 $(0.1,1)$ & adim. &Assumed\\
$\epsilon$  & {\small Enhanced infectiousness of carriers} & 1.3  & adim. &Assumed\\
\hline
\end{tabular}
}
\caption{Description and baseline values of the parameters of model (\ref{system4}).}
\label{parameters2}
\end{table}

\section{Basic properties}
\label{sec: basic}

We begin by determining the biologically feasible set for model (\ref{system4}). The following proposition also implies that the solutions of  (\ref{system4}) are bounded, provided that the initial conditions are non negative.

\begin{theorem}
The set
\begin{equation}
\label{region}
\Omega=\{(S,V,C,I,R,M)\in \mathbf{R}^6_+:0\leq M\leq k \tilde{N},\;\; 0\leq \tilde{N} \frac{\mu}{\mu+d} \leq S+V+C+I+R \leq \tilde{N}\}.
\end{equation}
is positively invariant and absorbing. 
\end{theorem}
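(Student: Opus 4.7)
The plan is to establish positive invariance and the absorbing property by handling three ingredients: non--negativity of all compartments, the upper/lower bounds on the total population $N=S+V+C+I+R$, and the upper bound on $M$.

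First, I would check non--negativity of $\mathbb{R}^{6}_{+}$ by the classical boundary--tangency argument: on each coordinate hyperplane $\{X_i=0\}$ intersected with the positive orthant, verify that $\dot X_i\geq 0$. From system (\ref{system4}) this is immediate in every case, e.g.\ $\dot S|_{S=0}=\Lambda+\phi R\geq 0$, $\dot V|_{V=0}=(\nu_0+bM)S\geq 0$, $\dot C|_{C=0}=\beta IS/N\geq 0$, $\dot I|_{I=0}=\sigma C\geq 0$, $\dot R|_{R=0}=\delta C+\rho I\geq 0$, and $\dot M|_{M=0}=akI\geq 0$. This shows $\mathbb{R}^{6}_{+}$ is positively invariant; the force of infection $\lambda(N)$ stays well defined because $N$ will be bounded below.

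Next I would sum the first five equations to obtain $\dot N=\Lambda-\mu N - dI$. Since $0\leq I\leq N$, this yields the two differential inequalities
\begin{equation*}
\Lambda-(\mu+d)N\;\leq\;\dot N\;\leq\;\Lambda-\mu N.
\end{equation*}
A standard comparison argument then gives
\begin{equation*}
\tfrac{\Lambda}{\mu+d}+\bigl(N(0)-\tfrac{\Lambda}{\mu+d}\bigr)e^{-(\mu+d)t}\;\leq\;N(t)\;\leq\;\tilde N+\bigl(N(0)-\tilde N\bigr)e^{-\mu t},
\end{equation*}
using $\tilde N=\Lambda/\mu$. Hence the slab $\tilde N\mu/(\mu+d)\leq N\leq \tilde N$ is positively invariant, and any trajectory starting in $\mathbb{R}^{6}_{+}$ enters (and remains in) this slab as $t\to\infty$, i.e.\ it is absorbing.

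Finally, for the information index I would use the last equation $\dot M=a(kI-M)$ together with $I\leq N$. On the invariant slab $N\leq \tilde N$ this gives $\dot M\leq a(k\tilde N-M)$, so by comparison $M$ is bounded above by $\max\{M(0),k\tilde N\}$ and $\limsup_{t\to\infty}M(t)\leq k\tilde N$; combined with $M\geq 0$ from step one, this yields the bound $0\leq M\leq k\tilde N$ on $\Omega$ and the absorbing property in the $M$--direction. The only mild obstacle is a timing issue for the absorbing statement: the bound $\dot M\leq a(k\tilde N-M)$ holds only after $N(t)$ has fallen below $\tilde N$, but since $N(t)\to\tilde N$ at an exponential rate $\mu$, a standard two--stage $\varepsilon$--argument (first let $N$ get within $\varepsilon$ of $\tilde N$, then dominate $M$ by the solution of $\dot y=a(k(\tilde N+\varepsilon)-y)$) closes this gap and delivers the full conclusion.
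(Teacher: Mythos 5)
Your proposal is correct and follows essentially the same route as the paper: the core step in both is summing the first five equations to get $\dot N=\Lambda-\mu N-dI$ and sandwiching $\dot N$ between $\Lambda-(\mu+d)N$ and $\Lambda-\mu N$, the paper then deferring the remaining details (non-negativity, the bound on $M$) to ``standard arguments.'' You simply fill in those deferred pieces explicitly (boundary tangency for $\mathbf{R}^6_+$, comparison for $M$ with the two-stage $\varepsilon$-argument), which is a welcome completion rather than a different method.
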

\begin{proof} It suffices to note that adding the first five equations of  \eqref{system4}, it follows
\begin{equation}
\label{eqN}
\dot N=\Lambda-\mu N-dI,
\end{equation}
so that from
$$ \Lambda-(\mu + d) N \leq \dot N \leq \Lambda-\mu N $$
it follows 
$$
\limsup_{t\rightarrow\infty} N(t) \leq \tilde{N}.
$$
$$
\liminf_{t\rightarrow\infty} N(t) \leq \frac{\Lambda}{\mu+d}=\tilde{N}\frac{\mu}{\mu+d}.
$$
Then, the rest of the proof follows from standard arguments as in \cite{BDL2008}.
\end{proof}

It is easy to check that model (\ref{system4}) admits the \emph{disease--free equilibrium} (DFE), which represents the scenario where there is no infection within the community, given by:
\begin{equation}
\label{DFE}
E_{0} = \left(S_{0},V_{0}, 0, 0, 0, 0\right),
\end{equation}
where 
$$
S_0=\frac{\Lambda}{\mu +\nu_{0}};\;\;\;\;
V_0=\frac{\nu_{0}\Lambda}{\mu(\mu +\nu_{0})}
$$

Our next step is to determine the basic reproduction number (BRN) $\mathcal{R}_{0}$ associated to model (\ref{system4}). To this aim, we apply the \emph{next generation matrix method} (NGM) \cite{Driessche2002}. According to the well--known procedure \cite{Driessche2002}, we identify $(C, I)$ and $(S,V,R)$ as  the infected and uninfected classes, respectively. Then, we can write the matrix $\mathcal{F}$ for the new infection and the matrix $\mathcal{V}$ for the remaining transfer. More precisely, the matrix $\mathcal{F}$, where the entries represent the rates of new infections (i.e. the rate of appearance of new infections in compartments $C$ and $I$), here is a column vector, actually, given by:
\begin{equation*}
\mathcal{F}=\left[\begin{array}{c}
\lambda S\\ 0 
\end{array}  \right]=\left[\begin{array}{c}
\frac{\beta}{N}(\epsilon C+I)S\\ 0 
\end{array}  \right].
\end{equation*}  
The matrix $\mathcal{V}$ where the entries represent the rate of infections transferred (i.e. it incorporates the remaining transitional terms, namely births, deaths, disease progression, and recovery in $C$ and $I$), is also a vector, here, given by:
\begin{equation*}
\mathcal{V}=\left[\begin{array}{c}
(\sigma +\delta +\mu)C \\ 
-\sigma C+ (\rho+\mu+d)I 
\end{array}  \right].
\end{equation*}
According to NGM, the BRN $\mathcal{R}_{0}$ is the greatest of the eigenvalues of the matrix $FV^{-1}$, where $F$ and $V$ are the matrices obtained by differentiating $\mathcal{F}$ and $\mathcal{V}$ with respect to $C$ and $I$ and then evaluated at the DFE $E_0$. In our case, we have:
$$
\tilde F=\left[\begin{array}{cc}
\frac{\beta S(\epsilon N-\epsilon C-I)}{N^2} & \frac{\beta S( N-\epsilon C-I)}{N^2} \\ 
0  & 0
\end{array}  \right],\;\;\;\;
\tilde V=\left[\begin{array}{ccc}
(\sigma+\delta +\mu )  & 0 \\ 
 -\sigma & (\rho+\mu+d)     
\end{array}  \right],
$$
and therefore,
$$
F=\tilde F({E_{0}}) =\left[\begin{array}{cc}
\frac{\beta \epsilon \mu}{\mu +\nu_{0}}  & \frac{\beta  \mu}{\mu +\nu_{0}} \\ 
  0   & 0
\end{array}  \right],\;\;\;
V=\tilde V({E_{0}})=\left[\begin{array}{cc}
(\sigma+\delta +\mu )  & 0 \\ 
 -\sigma & (\rho+\mu+d)     
\end{array}  \right].
$$
It can be easily checked that
$$
FV^{-1}=\left[\begin{array}{cc}
\frac{\beta\epsilon\mu }{(\mu +\nu_{0})(\sigma+\delta +\mu )}+\frac{\beta\mu\sigma}{(\mu +\nu_{0})(\sigma+\delta +\mu )(\rho+\mu+d)} &~~ 
\frac{\beta\mu}{(\rho+\mu+d)} \\ 
  0 & ~~0
\end{array}  \right],
$$
and hence the BRN is given by:
\begin{equation}\label{R0}
\mathcal{R}_{0}=\frac{\beta\epsilon\mu }{(\mu +\nu_{0})(\sigma+\delta +\mu )}+\frac{\beta\mu\sigma}{(\mu +\nu_{0})(\sigma+\delta +\mu )(\rho+\mu+d)}.
\end{equation}

The existence and uniqueness of an endemic equilibrium, which represent a steady state prevalence of the disease in the population, is given in the following proposition:

\begin{proposition}\label{propEE}
Model (\ref{system4}) admits an unique endemic equilibrium for $\mathcal{R}_{0}>1$.
\end{proposition}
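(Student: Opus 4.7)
The plan is to reduce the six equilibrium equations of \eqref{system4} to a single polynomial equation in the carrier component $C$, and then to localize its unique feasible root.

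First, setting $\dot I=\dot R=\dot M=0$ and writing $L=\rho+\mu+d$, I obtain
\[
I=\frac{\sigma C}{L},\qquad R=\frac{(\delta L+\rho\sigma)C}{(\phi+\mu)L},\qquad M=kI.
\]
The equation $\dot C=0$ reads $\lambda S=(\sigma+\delta+\mu)C$ and, combined with \eqref{foi}, yields for $C>0$ the proportionality $S=\kappa N$, where
$\kappa=(\sigma+\delta+\mu)L/[\beta(\epsilon L+\sigma)]$.
A direct comparison with \eqref{R0} exposes the pivotal identity $\kappa(\mu+\nu_0)/\mu=1/\mathcal{R}_0$, which will be the bridge between the polynomial below and the threshold $\mathcal{R}_0=1$.

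Second, summing the first five equations of \eqref{system4} at equilibrium reproduces \eqref{eqN}, hence $N=\tilde N - dI/\mu$. Therefore $N$, $S=\kappa N$, $V=(\nu_0+bM)S/\mu$ and $R$ become explicit rational functions of $C$. Substituting into the population identity $N=S+V+C+I+R$ and clearing denominators produces a quadratic $f(C):=A_2C^2-A_1C+A_0=0$. A routine computation shows that $A_2>0$ (the $C^2$ contribution stems from the cross--product $bM\cdot S$) and, thanks to the pivotal identity, that $A_0=\tilde N(1-1/\mathcal{R}_0)$, which is strictly positive precisely when $\mathcal{R}_0>1$.

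Third, the biologically meaningful range is $0<C<C_{\max}$ with $C_{\max}=\mu L\tilde N/(d\sigma)$, beyond which $N$ --- and hence $S$ and $V$ --- would turn negative. Using the explicit forms of the three coefficients, a short calculation collapses to $f(C_{\max})=-\mu L\tilde N B/(d\sigma)<0$, where $B$ collects the positive coefficients of $C$ coming from the $C+I+R$ side of the population identity. Since $f(0)=A_0>0>f(C_{\max})$ and $f$ is a parabola opening upward, the intermediate value theorem delivers exactly one root in $(0,C_{\max})$; the back--substitutions above then determine the remaining five components and give the unique endemic equilibrium. The main obstacle is purely algebraic bookkeeping across the six compartments; the only non--routine insight is the identity $\kappa(\mu+\nu_0)/\mu=1/\mathcal{R}_0$, which makes the transcritical structure at $\mathcal{R}_0=1$ transparent and links the existence of the EE to the threshold in a clean way.
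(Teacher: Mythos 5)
Your proof is correct and follows essentially the same route as the paper: both reduce the equilibrium conditions to a single quadratic in one infected variable (your $C$ is the paper's $I^*$ up to the factor $\sigma/L$), and both conclude via the three sign checks --- positive leading coefficient, constant term proportional to $\mathcal{R}_0-1$, and a negative value at the feasibility boundary $I=\Lambda/d$ (your $C_{\max}$) --- so the upward parabola has exactly one root in the admissible interval. Your bookkeeping via $S=\kappa N$ and evaluation at the point where $N=S=V=0$ is a cleaner way to obtain the same coefficients that the paper computes explicitly as $b_0,b_1,b_2$.
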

\begin{proof}
Let us denote the generic equilibrium (i.e. the vector solution with positive constant components) as $E^* =(S^{*},V^{*},C^{*},I^{*},R^{*},M^{*})$. From (\ref{system4}) it follows:
$$
\begin{array}{c}
S^{*}=\frac{(\sigma+\delta+\mu)(\rho+\mu+d)}{\sigma \lambda^*}I^{*},\;\;\;\;
V^{*}=\frac{(\nu_{0}+bM^*)S^{*}}{\mu}=\frac{(\nu_{0}+bkI^*)(\sigma+\delta+\mu)(\rho+\mu+d)}{\mu\sigma \lambda^*}I^{*},\\
C^{*}=\frac{(\rho+\mu+d)}{\sigma}I^{*},\;\;\;R^{*}=\frac{\delta(\rho+\mu+d)+\sigma\rho}{\sigma(\phi +\mu)}I^{*},\;\;\;
M^{*}=kI^{*}.
\end{array}
$$
where
$$\lambda^* = \beta\dfrac{\epsilon C^*+I^*}{N^*},
$$
and $I^*$ is the positive solution of 
\begin{equation}\label{Iee}
b_{2}I^{*2}+b_{1}I^{*} +b_{0}=0,
\end{equation}
where
$$
\begin{array}{c}
b_{2}=\sigma bkd\,b_3, \\ \\
b_{1}= \sigma d(\nu_{0}+\mu)\,b_3-\Lambda\sigma bk\,b_3 + \beta\mu\phi\left[\delta\left(\rho+\mu+d\right)+\sigma\rho\right]\left[\epsilon \left(\rho+\mu+d\right) +\sigma\right]-b_3\beta\mu\left[\epsilon \left(\rho+\mu+d\right) +\sigma\right],\\ \\
b_{0}=\sigma\Lambda (\nu_{0}+\mu)\,b_3[\mathcal{R}_{0}-1],
\end{array}
$$
and
$$
b_3=(\phi+\mu)(\sigma+\delta+\mu)(\rho+\mu+d).
$$
Let 
\begin{equation}
f(I^*)=b_{2}I^{*2}+b_{1}I^{*} +b_{0}.
\end{equation}
Note that $f''(I^*)=b_{2}>0$, and $f(0)=b_{0}>0$ for $\mathcal{R}_{0}>1$. Moreover, since $N^*>0$, from (\ref{eqN}) it follows $I^*<\frac{\Lambda}{d}$, and direct calculations reveal that
$$
f\left(\frac{\Lambda}{d}\right)=-\beta\mu\frac{\Lambda}{d}\left[\epsilon \left(\rho+\mu+d\right)\sigma\right]\left\{\mu\delta\rho+\mu\sigma\rho+(\phi+\mu)\mu\left[(\rho+\mu+d)+\sigma\right]+\delta\mu^2+\mu\delta d\right\} <0.
$$
Therefore the function $f$ has only one zero on $\left[0,\frac{\Lambda}{d}\right]$, implying that the endemic equilibrium is unique provided that $\mathcal{R}_{0}>1$, whereas and no endemic equilibria are admissible for $\mathcal{R}_{0}<1$. 
\end{proof}

\section{Stability of equilibria}
\label{sec: stability}
We begin with the following local stability result:
\begin{proposition}
If $\mathcal{R}_{0}<1$ then $E_{0}$ is locally asymptotically  stable in $\Omega$, and unstable if $\mathcal{R}_{0}>1$.
\label{thmLAS}
\end{proposition}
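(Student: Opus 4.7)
The plan is to derive local asymptotic stability from the Jacobian of system~(\ref{system4}) evaluated at $E_0$, exploiting the block-triangular structure that is automatic at a disease-free equilibrium. An alternative (and in fact shorter) route is to invoke Theorem~2 of van den Driessche and Watmough \cite{Driessche2002}: the matrices $F$ and $V$ were already constructed in the derivation of $\mathcal{R}_0$, and they satisfy the assumptions (A1)--(A5) of that theorem (non-negativity of $F$, $M$-matrix property of $V$, and the fact that the uninfected subsystem relaxes to $(S_0,V_0,0)$); the conclusion about local stability then follows verbatim. I would present both remarks but carry out the direct Jacobian calculation as the primary argument, since it is transparent and makes the proof self-contained.

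First I would write down $J(E_0)$, ordering the variables as $(C,I,S,V,R,M)$. Because $C=I=0$ at $E_0$ and because the variables $S,V,R,M$ do not feed back into the $(C,I)$ equations at linear order (the couplings $\nu'(M)S$, $\phi R$, etc., enter through rows outside the infected block), the Jacobian is block lower triangular with the $2\times 2$ ``infected'' block
\[
J_{\mathrm{inf}}=\begin{pmatrix}\dfrac{\beta\epsilon\mu}{\mu+\nu_0}-(\sigma+\delta+\mu) & \dfrac{\beta\mu}{\mu+\nu_0}\\[2mm] \sigma & -(\rho+\mu+d)\end{pmatrix}
\]
in the upper-left corner. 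The remaining $4\times 4$ block on the diagonal is itself lower triangular with diagonal entries $-(\mu+\nu_0)$, $-\mu$, $-(\phi+\mu)$, $-a$, all strictly negative. Hence the spectrum of $J(E_0)$ consists of these four negative numbers together with the two eigenvalues of $J_{\mathrm{inf}}$.

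Next I would analyze $J_{\mathrm{inf}}$ by the Routh--Hurwitz criterion for $2\times 2$ matrices, namely $\mathrm{tr}\,J_{\mathrm{inf}}<0$ and $\det J_{\mathrm{inf}}>0$. A direct computation gives
\[
\det J_{\mathrm{inf}}=(\sigma+\delta+\mu)(\rho+\mu+d)\bigl(1-\mathcal{R}_0\bigr),
\]
using the expression~(\ref{R0}) for $\mathcal{R}_0$. Thus $\det J_{\mathrm{inf}}>0$ is exactly equivalent to $\mathcal{R}_0<1$, and $\det J_{\mathrm{inf}}<0$ (which forces a positive real eigenvalue) is equivalent to $\mathcal{R}_0>1$. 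When $\mathcal{R}_0<1$ the trace is also negative, because $\mathcal{R}_0<1$ already implies $\beta\epsilon\mu/(\mu+\nu_0)<\sigma+\delta+\mu$ (this is even weaker than $\mathcal{R}_0<1$ since the second summand in~(\ref{R0}) is nonnegative), so that
\[
\mathrm{tr}\,J_{\mathrm{inf}}=\frac{\beta\epsilon\mu}{\mu+\nu_0}-(\sigma+\delta+\mu)-(\rho+\mu+d)<-(\rho+\mu+d)<0.
\]
Combining this with the negativity of the other four eigenvalues yields local asymptotic stability of $E_0$ for $\mathcal{R}_0<1$, and instability (one positive eigenvalue of $J_{\mathrm{inf}}$) for $\mathcal{R}_0>1$.

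I do not anticipate any real obstacle: the only delicate check is confirming the block-triangular structure, i.e.\ that the $\partial/\partial S$, $\partial/\partial V$, $\partial/\partial R$, $\partial/\partial M$ entries of the $\dot C$ and $\dot I$ equations vanish at $E_0$. This is immediate because each such entry is proportional to $C$ or $I$ (from the force-of-infection term $\lambda(N)S$), which are zero at $E_0$. The rest is algebra, and the characterization $\det J_{\mathrm{inf}}\propto 1-\mathcal{R}_0$ is the natural consistency check that the formula for $\mathcal{R}_0$ computed via the NGM indeed coincides with the stability threshold.
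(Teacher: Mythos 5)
Your argument is correct, and it is more self-contained than the paper's proof, which consists of a single sentence invoking Theorem~2 of \cite{Driessche2002} on the strength of the $F$ and $V$ matrices already computed in the derivation of $\mathcal{R}_0$ in \eqref{R0}. You mention that route as an alternative, but your primary argument is the explicit Jacobian computation: the block lower-triangular structure of $J(E_0)$ in the ordering $(C,I,S,V,R,M)$, the four manifestly negative eigenvalues $-(\mu+\nu_0)$, $-\mu$, $-(\phi+\mu)$, $-a$ of the uninfected block, and the Routh--Hurwitz analysis of $J_{\mathrm{inf}}=F-V$ with the identity $\det J_{\mathrm{inf}}=(\sigma+\delta+\mu)(\rho+\mu+d)(1-\mathcal{R}_0)$, which I have checked and which is exactly the consistency between the NGM threshold and the spectral one. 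What your version buys is transparency (the reader sees directly which eigenvalue crosses zero, which is also reused later in the forward-bifurcation analysis at $\beta=\beta^*$); what the paper's citation buys is brevity and the automatic verification that $\mathcal{R}_0<1$ is not merely sufficient but sharp. One cosmetic point: the remaining $4\times 4$ block is not literally lower triangular in the order $(S,V,R,M)$ --- it contains the off-diagonal couplings $\phi$ and $\mp bS_0$ above the diagonal and $\nu_0$ below --- but reordering to $(V,S,R,M)$ makes it upper triangular, so its spectrum is indeed $\{-\mu,-(\mu+\nu_0),-(\phi+\mu),-a\}$ and your conclusion stands; you may want to rephrase that sentence.
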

\begin{proof}
This result is a direct consequence of the NGM used in Section \ref{sec: basic}, see Theorem 2 in \cite{Driessche2002}.
\end{proof}

In epidemiological terms, Theorem \ref{thmLAS} states that it is possible to control the epidemic if we can reduce the value of $\mathcal{R}_{0}<1$ as long as the initial population is in the neighborhood of the DFE point $E_{0}$. However, it is not difficult to show that for $\mathcal{R}_{0}<1$ the elimination of the disease happens independently from the initial size of the population. In other words, $E_{0}$ is globally asymptotically stable (GAS) for $\mathcal{R}_{0}<1$. This is stated in the following theorem.

\begin{proposition}
If $\mathcal{R}_{0} < 1$, then $E_{0}$ is globally asymptotically stable in $\Omega$.
\label{thmGAS}
\end{proposition}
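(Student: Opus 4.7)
My plan is to prove global attractivity of $E_{0}$, which combined with the local asymptotic stability established in Proposition \ref{thmLAS} yields global asymptotic stability in $\Omega$. The strategy is a Lyapunov-LaSalle argument built on the next-generation-matrix data already assembled in Section \ref{sec: basic}, followed by an analysis of the reduced disease-free dynamics via asymptotically autonomous systems.

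The Lyapunov candidate I would use is $L(t) = w_1 C(t) + w_2 I(t)$, where $(w_1, w_2) > 0$ is the left Perron eigenvector of $FV^{-1}$ associated with the simple eigenvalue $\mathcal{R}_{0}$; from the upper-triangular form of $FV^{-1}$ computed above, one finds $w_2 = w_1\,\beta\mu/[\mathcal{R}_0\,(\mu+\nu_0)(\rho+\mu+d)]$ for any $w_1 > 0$. Differentiating $L$ along (\ref{system4}) and using the eigenvector identity $w^{T}F = \mathcal{R}_{0}w^{T}V$ together with the observation that $\lambda S = [S(\mu+\nu_{0})/(N\mu)]\cdot(F(C,I)^{T})_1$ (valid because only the first entry of $F(C,I)^{T}$ is nonzero), one obtains
\[
\dot L = \Bigl(\frac{S(\mu+\nu_{0})}{\mu N}\,\mathcal{R}_{0} - 1\Bigr)\,w^{T} V \begin{pmatrix} C \\ I \end{pmatrix},
\]
and one checks that $w^{T}V(C,I)^{T}\geq 0$ throughout $\Omega$, thanks to the Perron structure of $w$.

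The decisive step is to show that the prefactor $S(\mu+\nu_0)\mathcal{R}_0/(\mu N) - 1$ is (at least asymptotically) non-positive. The crude invariance bound $S\leq N$ only forces the stronger condition $\mathcal{R}_{0}\leq \mu/(\mu+\nu_{0})$, so a sharpening is required: I would use $S\leq N-V$ together with an asymptotic lower bound on $V/N$. Applying a fluctuation-lemma argument to $\dot V \geq \nu_{0}S - \mu V$ and $\dot N \leq \Lambda - \mu N$, I expect to prove $\liminf_{t\to\infty} V(t)/N(t) \geq \nu_{0}/(\mu+\nu_{0})$, whence $\limsup_{t\to\infty} S(t)/N(t) \leq \mu/(\mu+\nu_{0}) = S_{0}/\tilde N$. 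Combined with $\mathcal{R}_{0}<1$, the prefactor becomes ultimately strictly negative, and LaSalle's invariance principle (in its asymptotic form) then forces $C(t), I(t) \to 0$.

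Once the infected compartments vanish, the equations for $R$ and $M$ are asymptotically autonomous with exponentially stable linear limit dynamics $\dot R = -(\phi+\mu)R$, $\dot M = -aM$, so $R(t), M(t)\to 0$; the residual subsystem for $(S,V)$ then reduces to $\dot S = \Lambda - (\nu_{0}+\mu)S$, $\dot V = \nu_{0}S - \mu V$, whose unique globally attracting equilibrium is $(S_{0}, V_{0})$. Thieme's theorem on asymptotically autonomous systems yields convergence of the full trajectory to $E_{0}$. The main obstacle I anticipate is the fluctuation-lemma step, because $V$ is coupled to $S$ and $M$ non-trivially through the term $(\nu_{0}+bM)S$; if that direct route proves too delicate, a reasonable fallback is to enlarge $L$ by a Volterra correction $g(V/V_{0})=V/V_{0}-1-\ln(V/V_{0})$ with a weight chosen to absorb the problematic off-diagonal contributions in $\dot L$.
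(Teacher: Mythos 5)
Your Lyapunov set-up is sound as far as it goes: the identity $\lambda S=\frac{S(\mu+\nu_0)}{\mu N}\bigl(F(C,I)^{T}\bigr)_1$ is correct, the left-eigenvector relation $w^{T}F=\mathcal{R}_0 w^{T}V$ gives exactly the stated factorization of $\dot L$, and since $w^{T}V=\mathcal{R}_0^{-1}w^{T}F$ is a strictly positive row vector one indeed has $w^{T}V(C,I)^{T}\geq c\,(C+I)$ on $\Omega$. Note also that the paper gives no explicit proof of this proposition (it invokes Theorem 4.3 of Kamgang--Sallet and omits all verification), so your route is genuinely different and would be more informative --- if it closed.

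It does not close: the ``decisive step'' is unproven, and the route you propose to it fails. The bound $\liminf_{t\to\infty}V/N\geq\nu_0/(\mu+\nu_0)$ cannot be extracted from $\dot V\geq\nu_0 S-\mu V$ by a fluctuation argument, because setting $v=V/N$ gives $\dot v\geq\nu_0\,(S/N)-v\,(\Lambda/N)$, and at the times furnished by the fluctuation lemma this only yields a lower bound on $\liminf v$ proportional to the value of $S/N$ there --- which can be arbitrarily small when most of the population sits in $C$, $I$, $R$. Dually, the equivalent claim $\limsup_{t\to\infty}S/N\leq\mu/(\mu+\nu_0)$ is defeated by precisely the two terms your estimates ignore: the loss-of-immunity backflow $\phi R$ into $S$ and the disease-induced deaths $dI$, which shrink $N$ without shrinking $S$. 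From $\frac{d}{dt}(S/N)=\frac{\Lambda}{N}(1-S/N)-\lambda S/N-(\nu_0+bM)S/N+\phi R/N+(S/N)\,dI/N$ the best a priori bound obtainable this way is $\limsup S/N\leq(\mu+2d+\phi)/(\mu+2d+\phi+\nu_0)$, strictly larger than $\mu/(\mu+\nu_0)$. In fact the ratio bound you need holds (with equality in the limit) exactly when the disease has already died out and the population has relaxed to the disease-free profile, so assuming it is circular. This is the standard obstruction to sharp-threshold global-stability proofs by linear Lyapunov functions for models with standard incidence, waning immunity and disease-induced mortality, and it is what the hypotheses of the Kamgang--Sallet theorem cited in the paper are designed to handle (via conditions on the infected-block Jacobian over the whole invariant set rather than only at the DFE). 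Your fallback --- a Volterra correction in $V$ --- does not address the difficulty, since the problem is not the $M$-coupling in $\dot V$ but the absence of any lower bound on $S$. The closing step via asymptotically autonomous systems is fine once $C,I\to0$ is known, but as written the proof of that convergence stands only under the stronger hypothesis $\mathcal{R}_0<\mu/(\mu+\nu_0)$.
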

\begin{proof} This result can be obtained by checking that all the (five) hypotheses of Theorem 4.3 in \cite{KS2008} are satisfied. This procedure is well known and used elsewhere (see for example, Theorem 2 in the recent paper \cite{koetal}). For our model the result can be obtained in the same way. Therefore, for the sake of brevity, we omit the details.
\end{proof}

\textbf{Remark.} \textit{Note that in the reality,it is unlikely to have baseline vaccination rates sufficiently large to eradicate the disease. In other words the above theorem is equivalent to say that under voluntary information-driven vaccination the eradication is impossible. }

As far as the stability of the endemic equilibrium $E^*$ is concerned, it is difficult to obtained an analytical stability result, even through the linearization method. However, we can use a bifurcation theory approach to get an insight on the stability properties of the model (and therefore of $E^*$) for values of the BRN close to the critical value $\mathcal{R}_{0}=1$. We know that the DFE $E_0$ is a \emph{nonhyperbolic} equilibrium for $\mathcal{R}_{0}=1$ (this follows because the Jacobian matrix of system (\ref{system4}), evaluated at $E_0$, admits a zero eigenvalue) and we also know from Proposition \ref{propEE} that there is an endemic equilibrium branch bifurcating from $E_0$ at $\mathcal{R}_{0}=1$. What we are interested in is to check that such a branch is stable, at least when BRN is close to the critical value $\mathcal{R}_{0}=1$ (in such case, the bifurcation is a \emph{transcritical forward bifurcation} \cite{CChavez2004}).  To this aim, we study the centre manifold at the \emph{criticality} (i.e. at $E_0$ for $\mathcal{R}_{0}=1$) by using the approach developed in \cite{CChavez2004,duetal,Driessche2002}, which establishes that the normal form representing the dynamics of the system on the centre manifold is given by
$$
\dot y=\,A_1\,\phi\, y+\,A_2\,y^2,
$$
where $\phi$ denotes a bifurcation parameter to be chosen, and
\begin{equation}
\label{Coeff_b}
\begin{array}{l}
A_1=v \cdot D_{x\phi}f(x_{0},0)u \equiv \sum\limits_{k,i,j=1}^n v_{k}u_{i}\frac{\partial^2 f_{k}(x_{0},0)}{\partial x_{i}\partial \phi}.
\end{array}
\end{equation}
and,
\begin{equation}
\label{Coeff_a}
\begin{array}{l}
A_2=\frac{v}{2}\cdot D_{xx}f(x_{0},0)u^2 \equiv \frac{1}{2}\sum\limits_{k,i,j=1}^n v_{k}u_{i}u_{j}\frac{\partial^2 f_{k}(x_{0},0)}{\partial x_{i}\partial x_{j}},
\end{array}
\end{equation}
where $f_{k}$, $k=1,\dots,n$ denote the right hand sides of the system \eqref{system4}, $x$ denotes the state vector, $x_{0}$ the disease free equilibrium $E_{0}$ and $v$ and $u$ denote the left and right eigenvectors,  respectively, corresponding to the zero eigenvalue of the Jacobian matrix of system  \eqref{system4} evaluated at the criticality.

In our case, let us choose $\beta$ as bifurcation parameter. Observe that $\mathcal{R}_{0}=1$ is equivalent
to
\begin{equation}
\label{betastar}
\beta=\beta^*:=\frac{(\mu +\nu_{0})(\sigma+\delta +\mu )(\rho+\mu+d)}{\mu\left[\epsilon(\rho+\mu+d)+\sigma\right]},
\end{equation}
so that the DFE $E_0$ is (globally) stable when $\beta<\beta^*$, and is unstable when $\beta>\beta^*$. Therefore, $\beta^*$ is a bifurcation value.

The direction of the bifurcation occurring at $\beta=\beta^*$ can be derived from the sign of coefficients (\ref{Coeff_a}) and (\ref{Coeff_b}). More precisely, if $A_1>0$ and $A_2<0$, then at $\beta_1=\beta^*$ there is a forward bifurcation \cite{CChavez2004}\\
In our case we have the following:

\begin{theorem}
\label{Forward}
System \eqref{system4} exhibits a forward bifurcation at $E_0$ and $\mathcal{R}_{0}=1$.
\end{theorem}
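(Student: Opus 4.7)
The plan is to apply the Castillo--Chavez--Song centre manifold theorem, taking $\beta$ as the bifurcation parameter and the critical value $\beta^{*}$ of \eqref{betastar}, and to verify that the coefficients $A_{1}$ and $A_{2}$ of \eqref{Coeff_b}--\eqref{Coeff_a} satisfy $A_{1}>0$ and $A_{2}<0$.

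First I would compute the null eigenvectors of the Jacobian of \eqref{system4} evaluated at $E_{0}$ with $\beta=\beta^{*}$. The block of equations for $(C,I,R,M)$ has a cascading (triangular) structure, so a right null vector $u=(u_{1},\dots,u_{6})$ can be propagated from $u_{3}=1$:
\[u_{4}=\frac{\sigma}{\rho+\mu+d},\qquad u_{5}=\frac{\delta(\rho+\mu+d)+\sigma\rho}{(\phi+\mu)(\rho+\mu+d)},\qquad u_{6}=\frac{k\sigma}{\rho+\mu+d},\]
and then the $S$ and $V$ rows determine $u_{1},u_{2}$; the $V$ row gives the crucial balance
\[\mu u_{2}=\nu_{0}u_{1}+bS_{0}u_{6},\qquad\text{i.e.}\qquad \nu_{0}u_{1}-\mu u_{2}=-bS_{0}u_{6}.\]
For a left null vector $v$, inspection of the $V$, $S$, $R$ and $M$ columns of the Jacobian successively forces $v_{2}=v_{1}=v_{5}=v_{6}=0$, so that $v$ is supported on the $(C,I)$ block; choosing $v_{3}=1$ yields $v_{4}=\beta^{*}\mu/[(\mu+\nu_{0})(\rho+\mu+d)]>0$.

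The coefficient $A_{1}$ is then immediate: the parameter $\beta$ enters only $f_{1}$ and $f_{3}$, and since $v_{1}=0$ only the $\partial^{2}f_{3}/\partial x_{i}\partial\beta$ terms survive, giving $A_{1}=v_{3}\mu(\epsilon u_{3}+u_{4})/(\mu+\nu_{0})>0$. For $A_{2}$ the key observation is that $v_{1}=v_{2}=v_{5}=v_{6}=0$ and $f_{4}$ is linear in the state, so only the second derivatives of $f_{3}=\beta(\epsilon C+I)S/N-(\sigma+\delta+\mu)C$ contribute. Instead of expanding the nine quadratic terms one by one, I would evaluate $f_{3}(E_{0}+\tau u)$ along the ray and read off the $\tau^{2}$-coefficient; writing $\ell:=\epsilon u_{3}+u_{4}$ and $U:=u_{1}+u_{2}+u_{3}+u_{4}+u_{5}$, a short calculation gives
\[A_{2}=\frac{v_{3}\beta^{*}\ell\,S_{0}}{\mu\,N_{0}^{2}}\bigl[\nu_{0}u_{1}-\mu(u_{2}+u_{3}+u_{4}+u_{5})\bigr].\]

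The hard part is controlling the sign of the bracket, because $u_{1}$ is determined by a first-row equation whose sign is not transparent. This is where the $V$-balance relation above becomes essential: substituting $\nu_{0}u_{1}-\mu u_{2}=-bS_{0}u_{6}$ collapses the bracket to $-bS_{0}u_{6}-\mu(u_{3}+u_{4}+u_{5})$, which is strictly negative since $b,S_{0}$ and $u_{3},u_{4},u_{5},u_{6}$ are all positive. Hence $A_{1}>0$ and $A_{2}<0$, and the Castillo--Chavez--Song theorem yields a forward (supercritical) transcritical bifurcation at $\beta=\beta^{*}$, equivalently at $\mathcal{R}_{0}=1$, which also implies that the endemic branch $E^{*}$ of Proposition~\ref{propEE} is locally asymptotically stable for $\mathcal{R}_{0}$ just above $1$.
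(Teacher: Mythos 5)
Your proposal follows the same overall strategy as the paper (Castillo--Chavez--Song centre manifold reduction with $\beta$ as bifurcation parameter and critical value \eqref{betastar}, the same right and left null eigenvectors up to normalization, and the same expression for $A_{1}$), but it handles the key step --- the sign of $A_{2}$ --- by a genuinely different and cleaner route. The paper expands all second derivatives of $f_{3}$ and presents $A_{2}$ as $-v_{3}$ times a bracket of manifestly positive monomials in $u_{3},u_{4},u_{5}$; what that presentation hides is that the raw quadratic form also involves $u_{1}$ and $u_{2}$, whose signs are not determined (the paper's own formula for $u_{1}$ has an indeterminate sign), so the positivity of the bracket is only credible after those terms have been eliminated. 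Your directional expansion of $f_{3}(E_{0}+\tau u)$ produces the compact form $A_{2}\propto \nu_{0}u_{1}-\mu(u_{2}+u_{3}+u_{4}+u_{5})$, and your use of the $V$-row balance $\nu_{0}u_{1}-\mu u_{2}=-bS_{0}u_{6}$ makes that elimination explicit and turns the sign question into an immediate consequence of $u_{3},u_{4},u_{5},u_{6}>0$. I verified the intermediate identities (including $S_{0}/N_{0}=\mu/(\mu+\nu_{0})$ and the equivalence of your prefactor with $\beta^{*}\mu/[\Lambda(\mu+\nu_{0})]$), and they are correct; your value of $v_{4}$ also satisfies both the $C$- and $I$-column equations of $vJ_{\beta^{*}}=0$. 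The one thing you omit that the paper does carefully, and that is a hypothesis of the theorem you invoke, is the spectral verification at criticality: you should record that $J_{\beta^{*}}$ has $0$ as a \emph{simple} eigenvalue and that the remaining five eigenvalues ($-(\nu_{0}+\mu)$, $-\mu$, $-(\phi+\mu)$, $-a$, and the negative root of the residual quadratic in the $(C,I)$ block) are all negative; without this the centre manifold normal form $\dot y = A_{1}\phi y + A_{2}y^{2}$ does not govern the local dynamics. Adding that one check makes your argument complete, and in the $A_{2}$ step it is more transparent than the paper's.
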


\begin{proof} The Jacobian matrix of system \eqref{system4} evaluated at $E_0$ and $\beta=\beta^*$ is given by:
\begin{equation}
\label{jbeta}
J_{\beta^*}=\left[\begin{array}{cccccc}
-(\nu_{0} +\mu)&0 & -\frac{\beta^* \epsilon \mu}{\nu_{0} +\mu} &-\frac{\beta^* \mu}{\nu_{0} +\mu}&\phi&-\frac{b\Lambda}{\mu+\nu_{0}}\\ 
\nu_{0}  & -\mu &0 & 0 & 0&\frac{b\Lambda}{\mu+\nu_{0}}\\
0&0 &\frac{\beta^* \epsilon \mu}{\nu_{0} +\mu}-(\sigma+\delta+\mu) & \frac{\beta^*  \mu}{\nu_{0} +\mu} & 0& 0\\
0 & 0 & \sigma & -(\rho +\mu + d) &0&0 \\
0 & 0 &\delta & \rho & -(\phi +\mu)&0 \\
0 & 0 &0 & ak &0&-a
\end{array}  \right]
\end{equation}
The eigenvalues are
$\lambda_{1} =-(\nu +\mu)$, $\lambda_{2}=-\mu$, $\lambda_{3}=-(\phi +\mu)$, $\lambda_{4}=-a$, and two roots of the quadratic equation:
$$\varphi^2+\left[\frac{\beta^* \epsilon \mu}{\nu_{0} +\mu}-(\sigma +\delta +\mu)-(\rho +\mu +d)\right]\,\varphi-\frac{\beta^* \mu}{\nu_{0} +\mu}(\epsilon(\rho +\mu +d)+\sigma)+(\sigma +\delta +\mu)(\rho +\mu +d)=0.
$$
From \eqref{betastar} it follows that
\begin{align*}
\varphi^2+\left\{\frac{(\sigma +\delta +\mu)(\rho +\mu +d)\epsilon}{\epsilon(\rho +\mu +d)+\sigma}-\left[(\sigma +\delta +\mu)+(\rho +\mu +d)\right]\right\}\varphi=0.
\end{align*}
Therefore, the remaining two eigenvalues are $\lambda_{5}=0$, and
$$
\lambda_{6}=-\frac{(\sigma +\delta +\mu)\sigma +(\rho +\mu +d)(\epsilon(\rho +\mu +d)+\sigma)}{\epsilon(\rho +\mu +d)+\sigma}<0.
$$
Hence, the matrix $J_{\beta^*}$ has zero as a simple eigenvalue and all the other eigenvalues are real and negative. 

Now, we determine the right and left eigenvectors of $J_{\beta^*}$ corresponding to the zero eigenvalue.
Let $u=(u_{1},u_{2},u_{3},u_{4},u_{5},u_{6})^T $ denote a right eigenvector associated with the zero eigenvalue $\lambda_{5}=0$, i.e.
$$
J_{\beta^*}u=0.
$$
After some algebraic manipulations, we get:
\begin{align*}
u_{1}&=\frac{\beta^* }{(\nu_{0} +\mu)^2}\left[
 -\frac{\epsilon \mu(\rho +\mu + d)}{\sigma}- \mu +\frac{\phi(\nu_{0} +\mu)[\delta(\rho +\mu + d)+\sigma\rho]}{\beta^*\sigma(\phi +\mu)} -\frac{b k\Lambda}{\beta^*}\right]u_{4},
\\ 
u_{2}&= [\frac{\nu_{0}}{\mu}\frac{\beta^* }{(\nu_{0} +\mu)^2}\left[
 -\frac{\epsilon \mu(\rho +\mu + d)}{\sigma}- \mu +\frac{\phi(\nu_{0} +\mu)[\delta(\rho +\mu + d)+\sigma\rho]}{\beta^*\sigma(\phi +\mu)} -\frac{b k\Lambda}{\beta^*}\right] +\frac{b k\Lambda}{\mu(\mu+\nu_{0})}]u_{4}, \\
u_{3}&= \frac{(\rho +\mu + d)}{\sigma}u_{4},
 \\
u_{4}&=u_{4} \geq 0,\\
u_{5}& = \frac{\delta(\rho +\mu + d)+\sigma\rho}{\sigma(\phi+\mu)}u_{4},\\
u_{6}&=ku_{4}.
\end{align*}
Similarly, the left eigenvectors can be determined from equation $vJ_{\beta^*}=\lambda v$, or equivalently $J_{\beta^*}^T v^T=0$, and we get:
\begin{align*}
v_{1}=v_{2}=v_{5}=v_{6}=0,\;\;
v_{3}= \frac{(\rho +\mu + d)}{\sigma}v_{4},\;\;
v_{4}=v_{4} \geq 0. 
\end{align*}
The coefficients (\ref{Coeff_a}) and (\ref{Coeff_b}) can now be computed. It follows that:
$$
A_1=v_{3}
\left[\frac{\mu}{\nu_{0}+\mu}(\epsilon u_{3}+u_{4})\right],
$$
and
$$
A_2=-v_{3}\frac{\beta^*\mu}{\Lambda(\nu_{0}+\mu)}
\left[\frac{b\Lambda}{\mu(\nu_{0}+\mu)}ku_{4} \left(u_{3}\epsilon\mu
+ku_{4}\right)u_{4}\mu+2u_{3}^2\epsilon\mu+u_{3}u_{4}(\epsilon+1)\mu
+u_{3}u_{5}\epsilon\mu
+2u_{4}^2\mu
+u_{4}u_{5}\mu\right].
$$
Therefore we have $A_1>0$ and $A_2<0$ and the system undergoes a forward bifurcation. In other words, when $\mathcal{R}_{0}$  crosses the threshold value $\mathcal{R}_{0}=1$ from left to right, the disease free equilibrium changes its stability from  asymptotically stable to unstable and a positive endemic equilibrium $E^*$ appears and this last is locally asymptotically stable, at least for values of $\mathcal{R}_{0}$ close to 1.
\end{proof}

\section{Numerical Investigations}
\label{sec:sens}
In this section we will assess how the endemic equilibrium and the dynamics of the model depend on some key behavioral and epidemiological parameters.\\
In Figure \ref{IeqKB} we consider how the steady state number of infectious at the endemic equilibrium, $I^*$, depends on the normalized information coverage $h$ (whose range is $(0,1)$) for the following values of the sensitivity slope: $b =1000$ (black solid line), and $b= 2000$ (red dashed line). Not surprisingly, $I^*$ is decreasing with increasing both $h$ and $b$. \\
\begin{figure}[t!]
	\centering
		\includegraphics[width=0.65\textwidth]{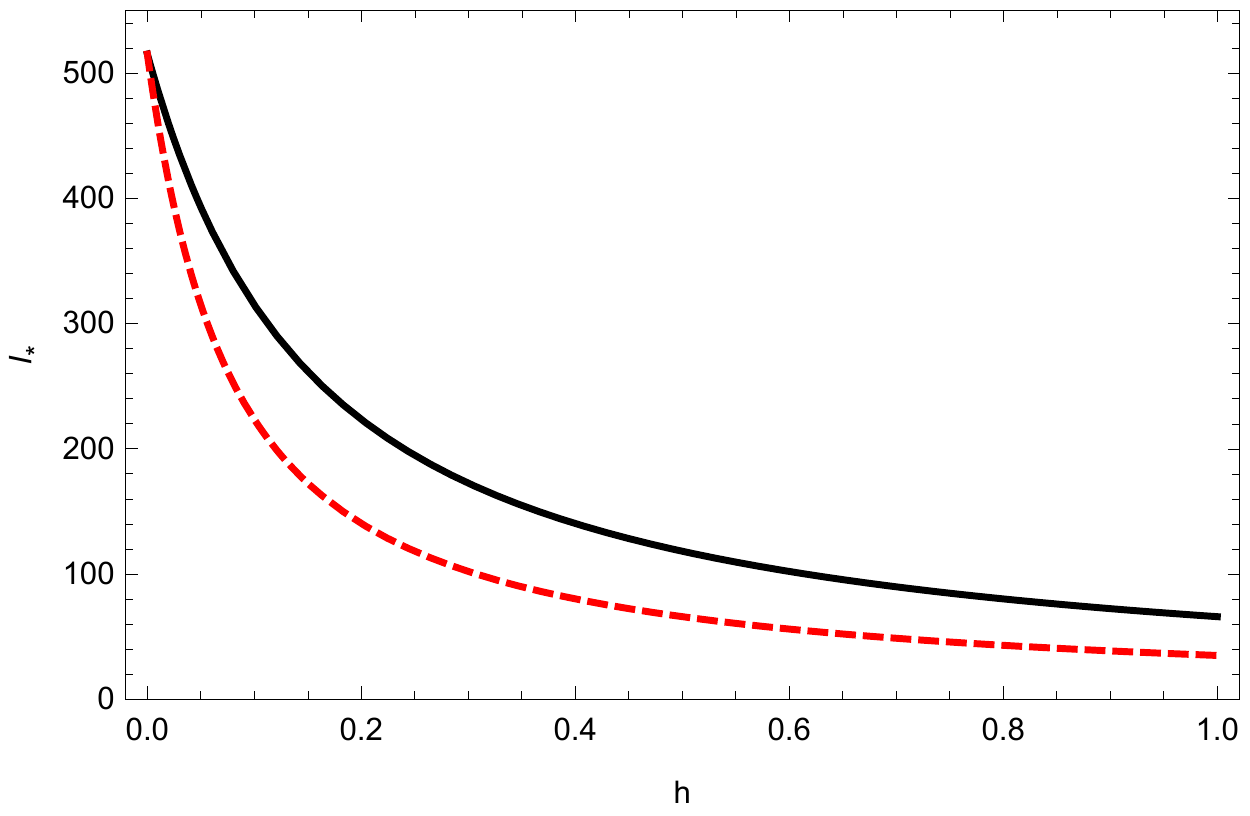}
	\caption{Endemic equilibrium for $I$ vs. normalized information coverage $h$. Solid black line: $b=1000$, dashed red line: $b=2000$. Other parameters: as in the Table \ref{parameters2}.}
	\label{IeqKB}
\end{figure}

As far as the simulations of the spread and control of the disease are concerned, we will consider the following initial conditions:\\
\textit{i)} before the introduction of the disease the population was at its steady state, thus $N(0^-)= \tilde{N}$;\\
\textit{ii)} one single infectious subject enters into the population: $I(0)=1$;\\
\textit{iii)} No carriers and removed subjects are present at $t=0$: $C(0)=0$ and $R(0)=0$;\\
\textit{iv)}  a relatively small fraction of the population has been vaccinated: $V(0) = 0.1\tilde{N}$. It follows that $S(0)= 0.9 \tilde{N}-1$.

\smallskip

\noindent We first consider the epidemic scenario, by simulating the system from the initial time to one year later ($t \in [0,1]$) as illustrated in Figure \ref{CI1yT} and Figure \ref{Pop1yT}. Namely, in the left panel of Figure \ref{CI1yT} we show the time--course of both $C$ and $I$ in the case where there is no behavioral dependence of the vaccination rate, i.e. $b=0$, which implies that  $\nu(M)=\nu_0$. In both the central and right panels of Figure \ref{CI1yT}, where we set $b=2000$, we show the cases corresponding to $T = 10$ $days$ and $T = 120$ $ days$, respectively. For $T=10$ we can see that the prompt response induces a remarkable decrease of the epidemic peak, which is no more observed for $T = 120$  $days$. The time course of the total population $N(t)$ during the first year is shown in Figure \ref{Pop1yT}. 
\begin{figure}[ht]
	\centering
		\includegraphics[width=0.32\textwidth]{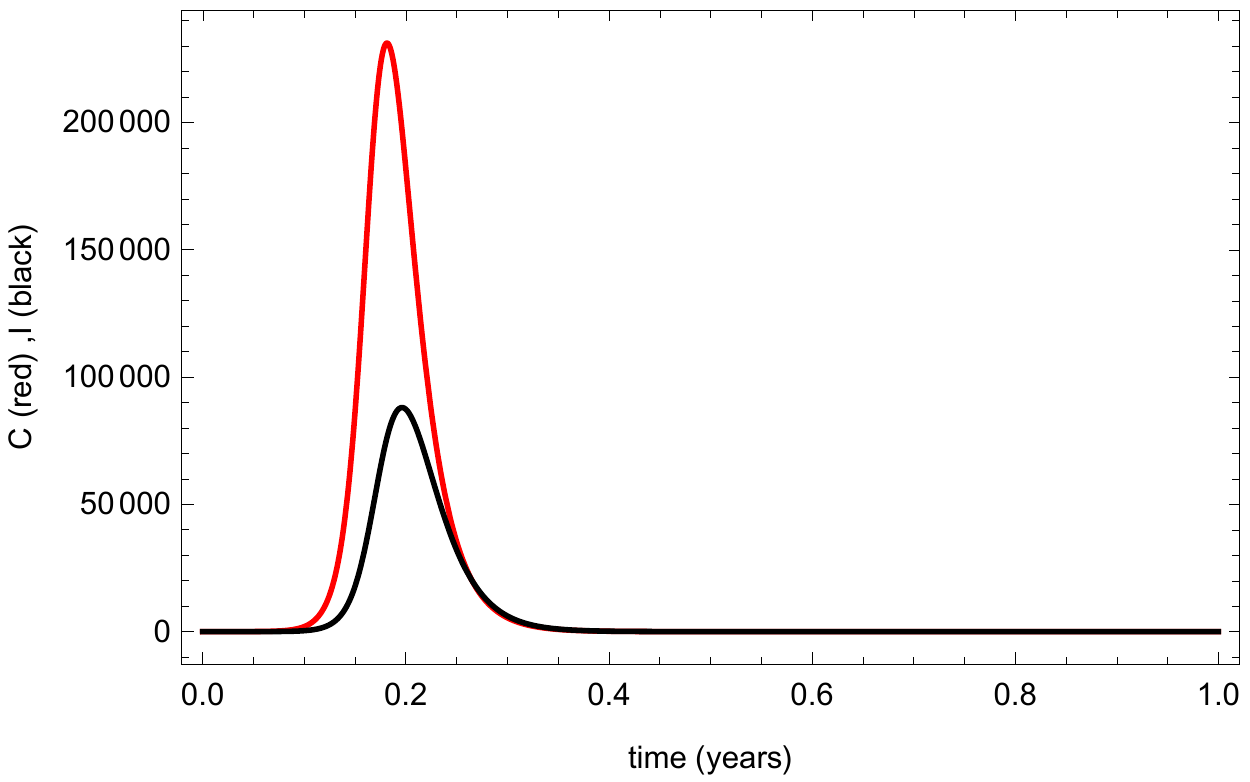}
		\includegraphics[width=0.32\textwidth]{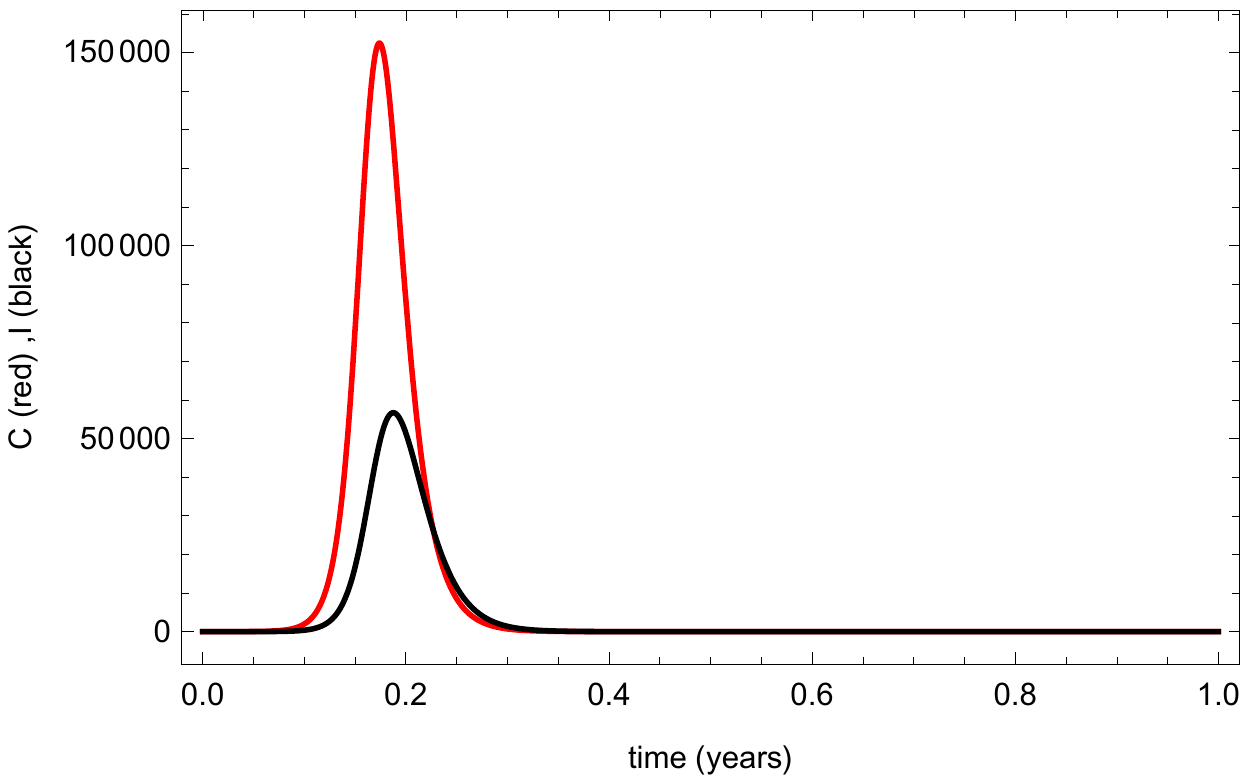}		
		\includegraphics[width=0.32\textwidth]{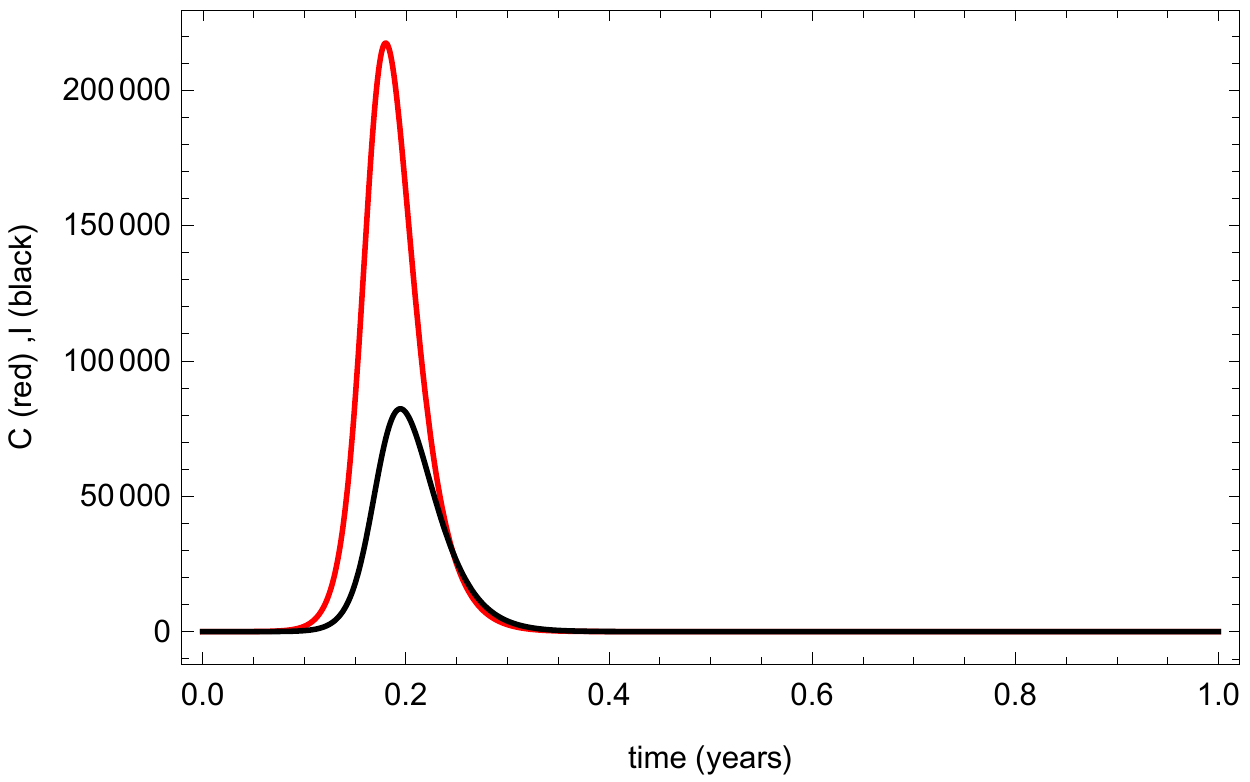}
	\caption{Impact of behavioral terms on the epidemics outbreak during the first year. Time courses of $C(t)$ (red) and of $I(t)$ (black). Left panel: absence of behavioral components in the vaccination ($b=0$); central and left panel: presence of behavioral components in the vaccination ($b=1000$). Central panel: $T=10$ $days$; right panel: $T=120$ $days$. In all panels $\sigma =26$. Other parameters as in the Table \ref{parameters2}}
	\label{CI1yT}
\end{figure}

\begin{figure}[ht]
	\centering
		\includegraphics[width=0.32\textwidth]{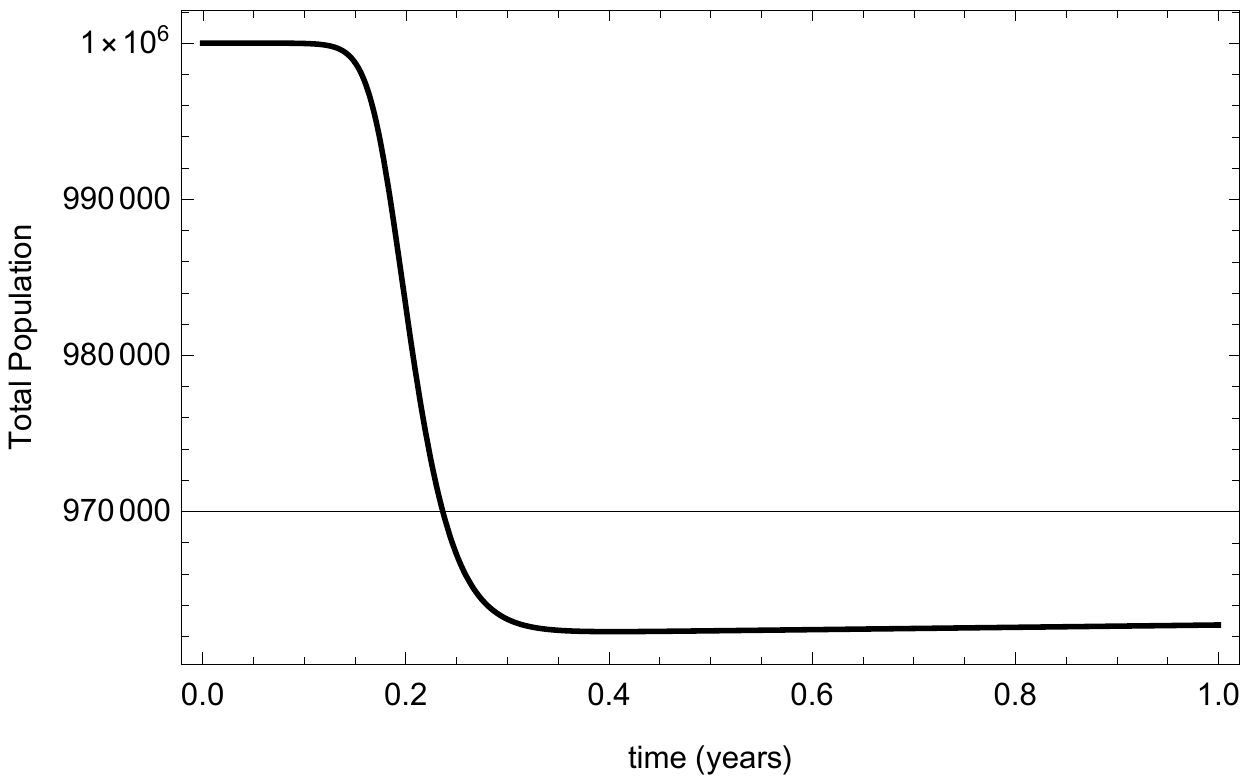}
		\includegraphics[width=0.32\textwidth]{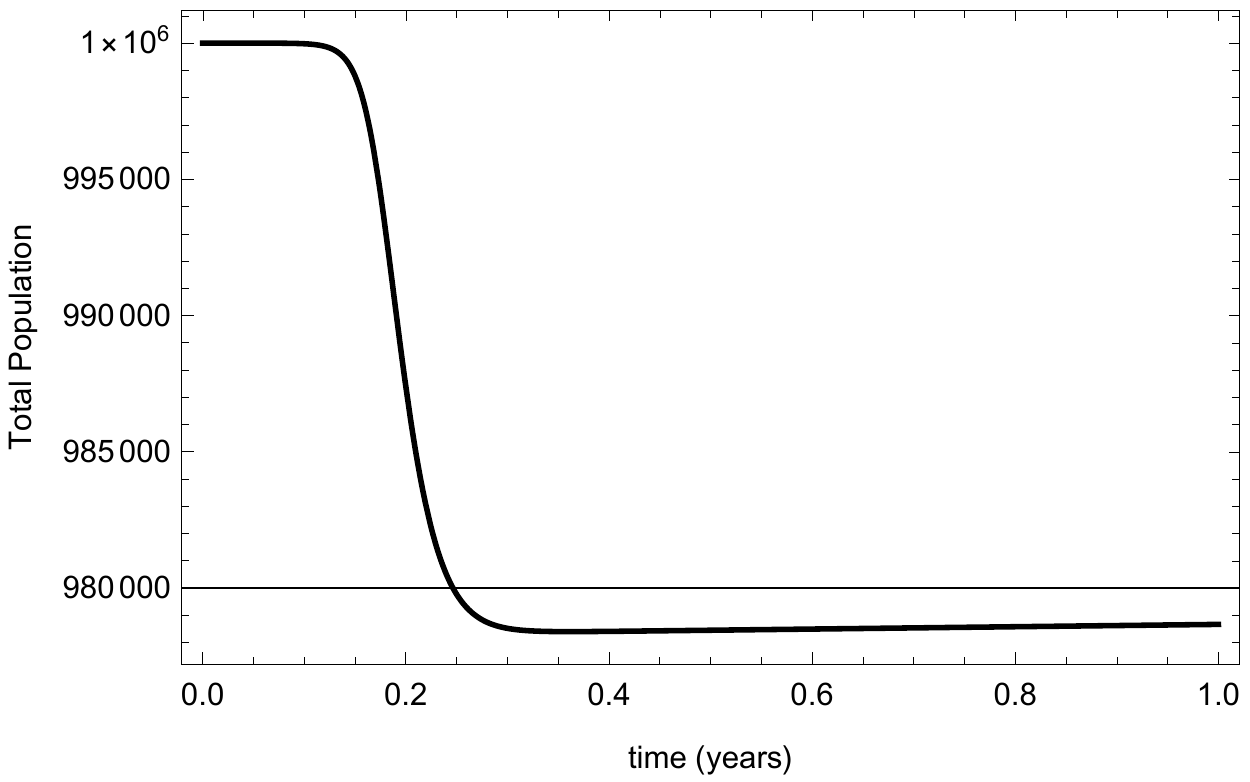}
		\includegraphics[width=0.32\textwidth]{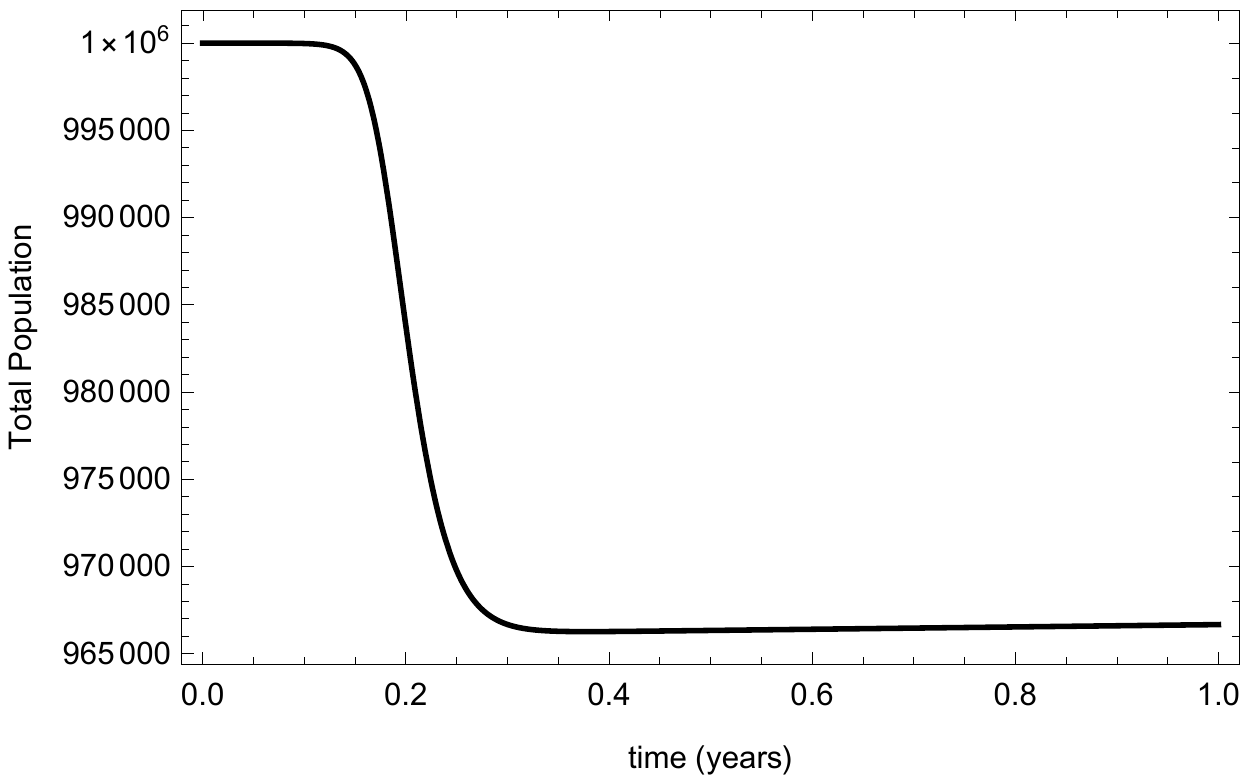}
	\caption{Impact of behavioral terms on the total population $N(t)$ during the epidemics outbreak. Left panel: absence of behavioral components in the vaccination ($b=0$); central and left panel: presence of behavioral components in the vaccination ($b=1000$). Central panel: $T=10$ $days$; right panel: $T=120$ $days$. In all panels $\sigma =26$. Other parameters as in the Table \ref{parameters2}.}
	\label{Pop1yT}
\end{figure}

\smallskip

An analytical assessment of the onset of bifurcations is hard to obtain. Thus, given the parameter $a$, we computed the eigenvalues $\lambda_i(a)$. The maximum value of the real part of the $\lambda_i(1/T)$ is then plotted in Figure \ref{bifudiag} (left panel). As it can be seen, there is an interval $T \in (T_{min},T_{max}) \approx (24,1645)$ $days$ where the endemic equilibrium is unstable and steady state oscillations (not necessarily periodic) can take place. In other words, the model predict the onset of recurrent epidemics. Note that for $T \approx 220$ $days$ the largest positive real part of the eigenvalues reaches its maximum. In Figure \ref{oscillations} we plotted the corresponding steady state oscillations for, respectively, $T=120$ $days$ (left panel) and for $T=220$ $days$ (right panel). As one can see the model predicts recurrent epidemics. Observe that for $T= 220$ $days$ the epidemics have larger peak and period than for $T= 120$ $days$.
\begin{figure}[ht]
	\centering
		\includegraphics[width=8cm,height=5cm]{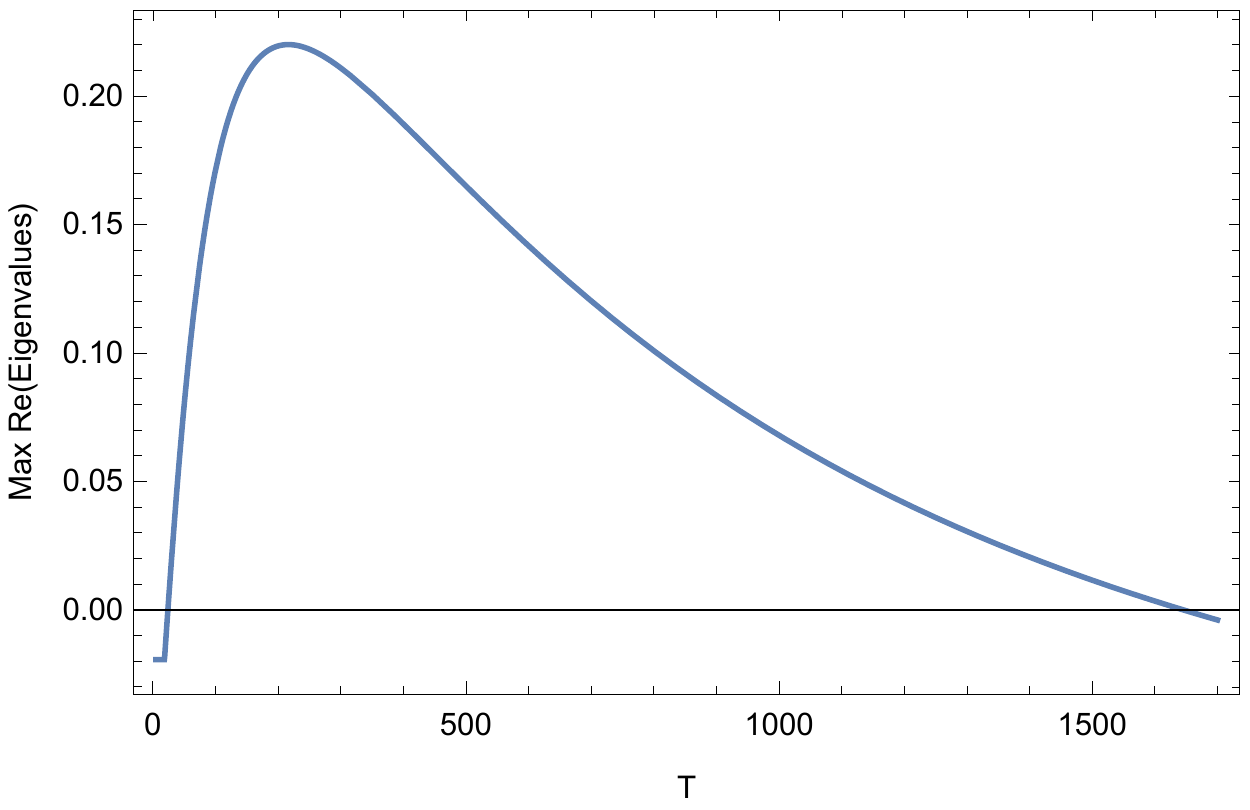}
		\includegraphics[width=8cm,height=5cm,keepaspectratio]{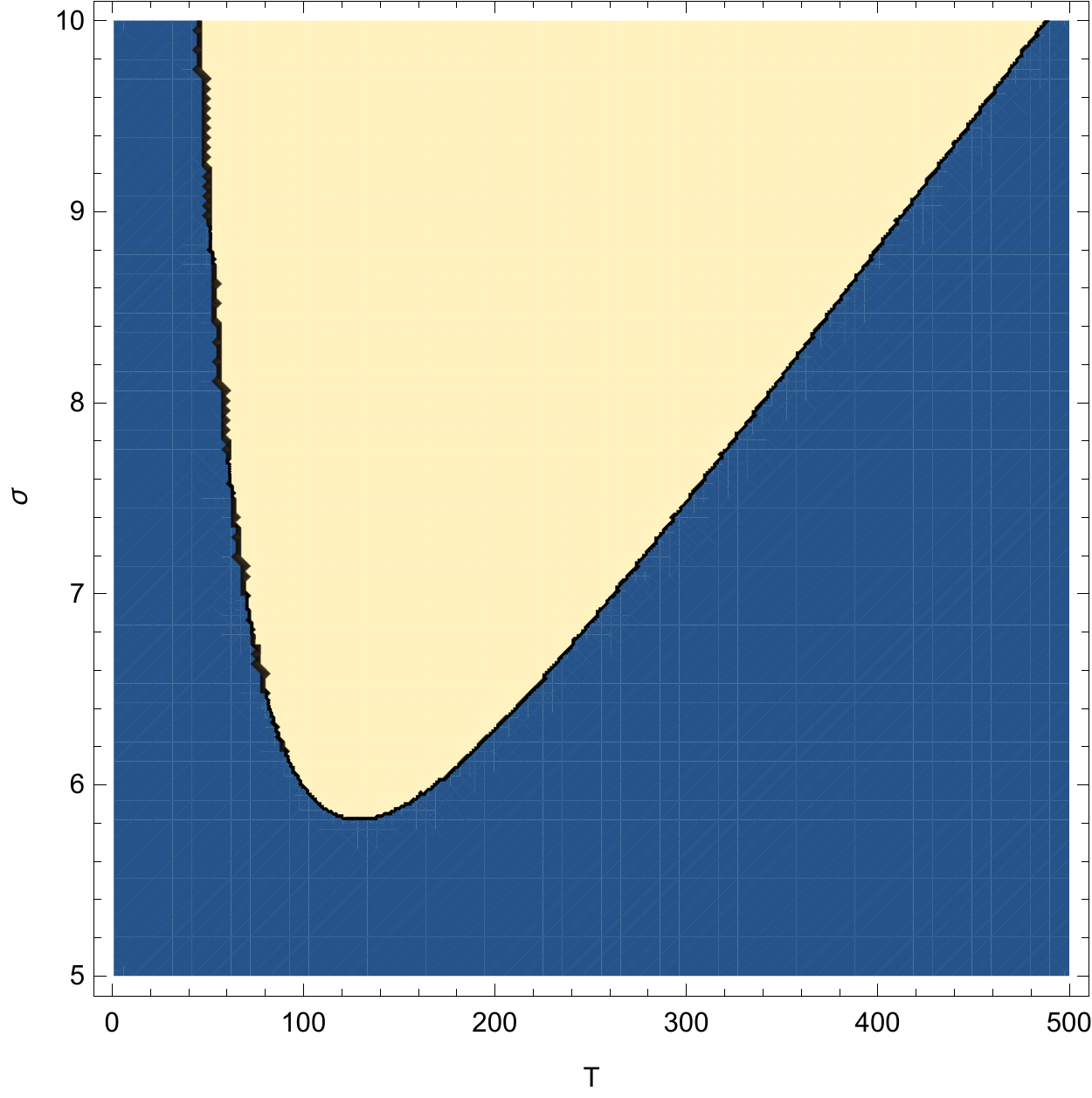}
	\caption{Bifurcation diagrams at the endemic equilibrium. Left panel: one bifurcation parameter, $T$, and $\sigma=26$. Right panel: two bifurcation parameters: $(T,\sigma)$. Other parameters as indicated in the Table \ref{parameters2}. The region of instability is in light color, and the local stability region is dark.}
	\label{bifudiag}
\end{figure}

\begin{figure}[ht]
	\centering
	\includegraphics[width=0.48\textwidth]{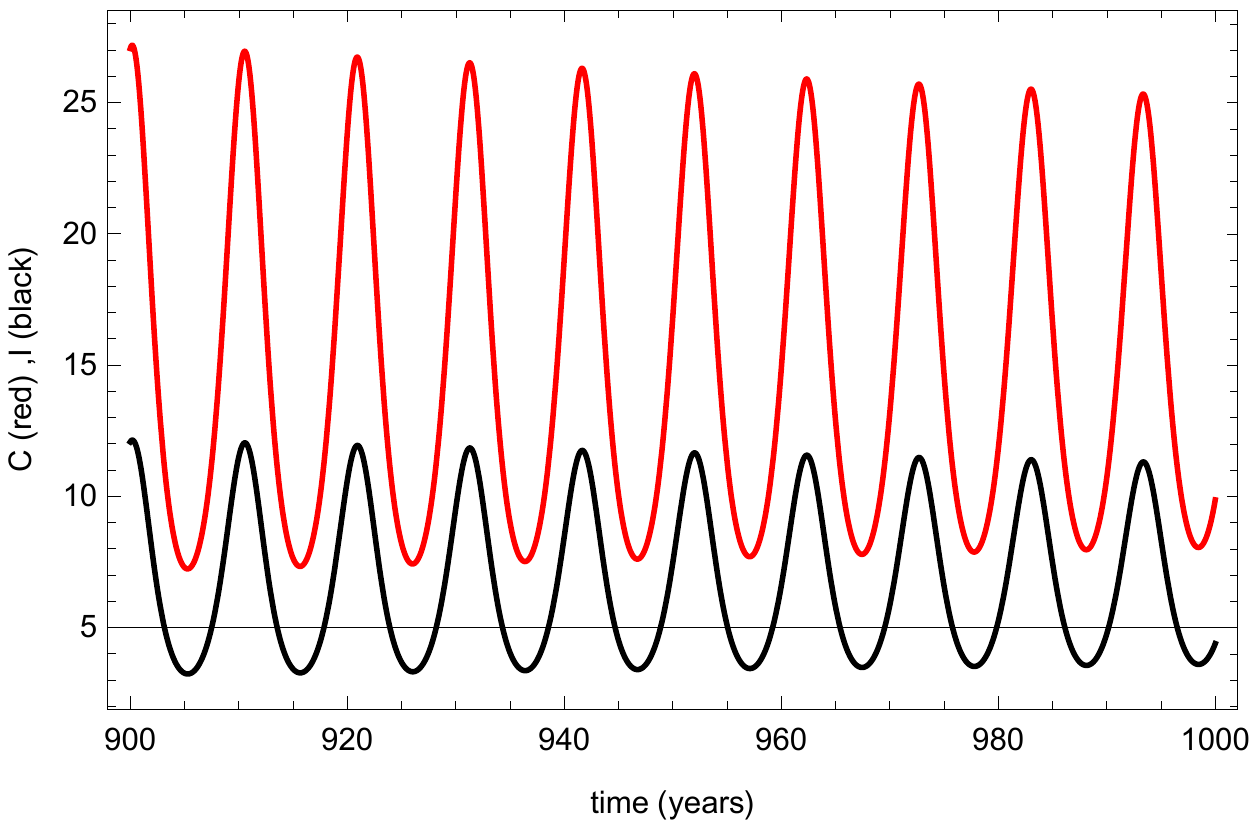}
		\includegraphics[width=0.48\textwidth]{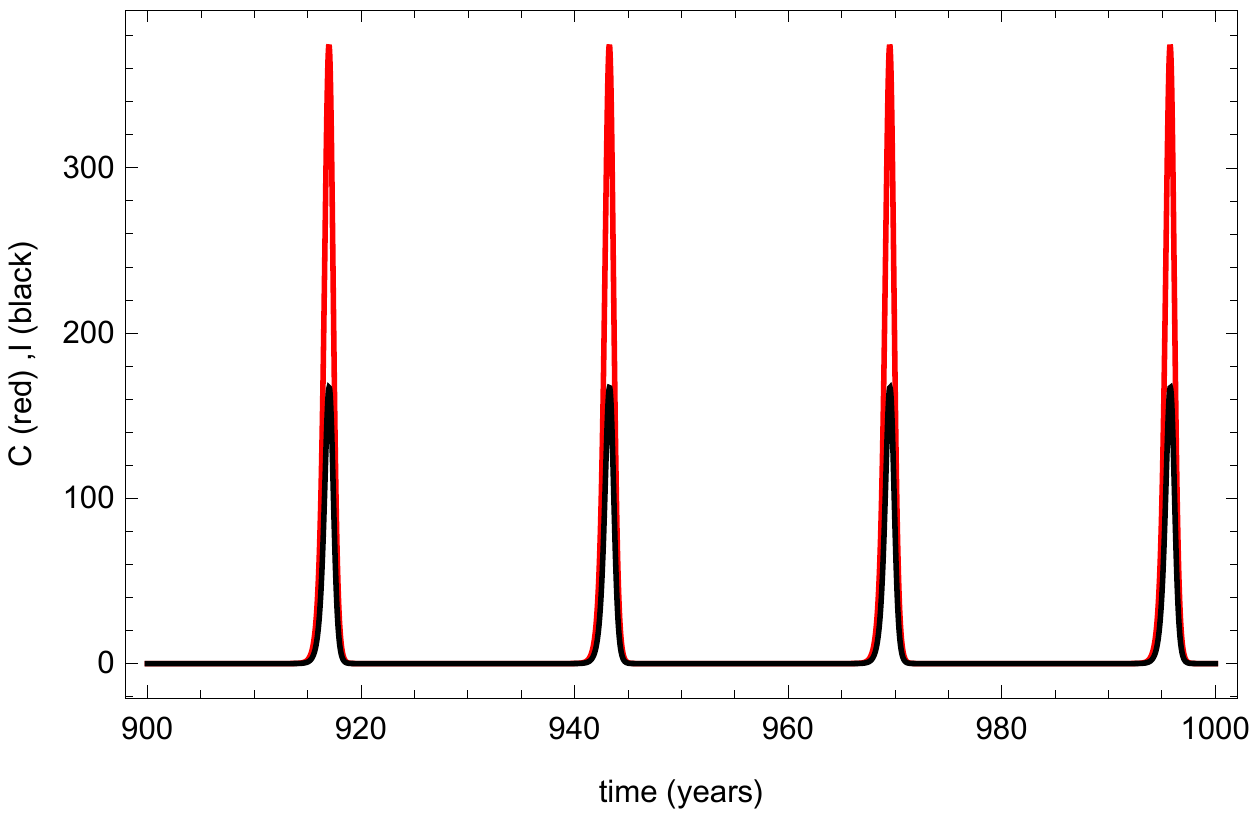}
	\caption{Steady state oscillations. Left panel: $T=120 days$; right panel: $T = 220 days$. Red lines: $C(t)$, black lines: $I(t)$. Other parameters as in Table \ref{parameters2}.}
	\label{oscillations}
\end{figure}

All the above simulations were performed by assuming $\delta \approx \sigma$ (see Table \ref{parameters2}), i.e. the recovery rate of carriers is equal to the rate of moving from carriers to infected, which seems reasonable. This however cause a large infectious peak, as in the models considered by Irving et al. \cite{IBC2012}, where - however - vaccination was not considered. The presence of vaccination and of behavior--dependent increase of the vaccination rate, of course, partially mitigate this phenomenon. Irving and coauthors noticed in \cite{IBC2012} that in their model lower peaks of $I(t)$ are obtained by assuming $\delta >> \sigma$. Since $P_R=\delta/(\delta+\sigma)$ is the probability that an individual going out of the $C$ compartment enters in the $R$ compartment, the above--mentioned  hypotheses is equivalent to assume that $p_R \approx 1$, i.e. that the vast majority of carriers do not become infectious. To compare, the assumption $\delta=\sigma$ correspond to $p_R=0.5$, i.e. that half of carriers become infectious (apart those dying for natural causes, of course).\\
Based on the above considerations, we computed a two-parameters bifurcation diagram (BBD) where not only $T$ but also $\sigma$ is considered as bifurcation parameter. The BBD is shown in  Figure \ref{bifudiag} (right panel), where the region of instability is in light color, and the local stability region is dark. Our numerical computations showed that: \textit{i)} under a threshold value $\sigma = \sigma_m \approx 4.9$ the system is LAS; \textit{ii)} for  $\sigma \in (\sigma_m,26)$ the endemic equilibrium is unstable in a window of values of $T$: 
$T \in (T_{min}(\sigma),T_{max}(\sigma))$ whose size $A=T_{max}(\sigma)-T_{min}(\sigma)$ increases with $\sigma$.\\
To show the impact of $\sigma$ in the first phase of epidemics, we performed alternative simulations by setting a lower value of $p_R$ compared to $p_R=0.5$. Namely we set  $\delta = 10 \sigma$, which implies that $p_R=1/9$. The simulation is reported in Figure \ref{SimuSigma2p6}, where one can observe that the epidemic peak is reduced when compared to the case shown in Figure \ref{CI1yT}.\\
 \begin{figure}[ht]
	 \centering
		 \includegraphics[width=0.48\textwidth]{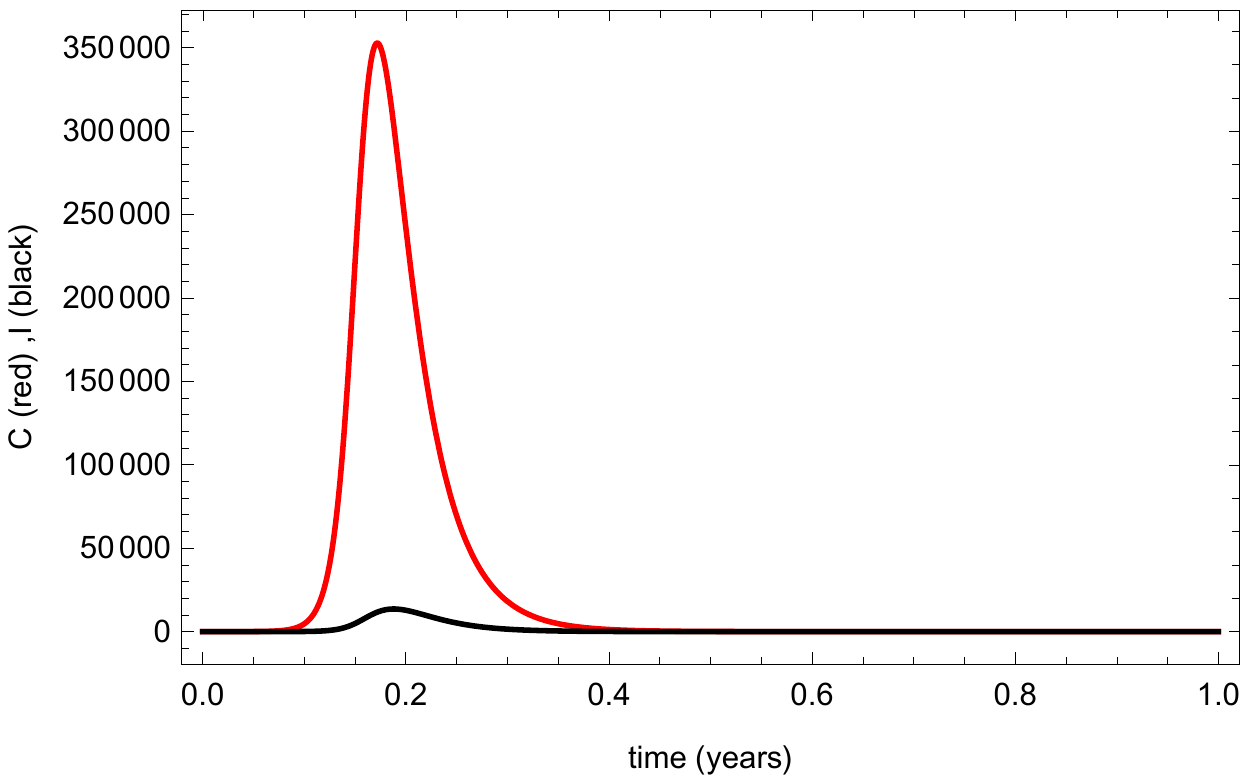}
	 \includegraphics[width=0.48\textwidth]{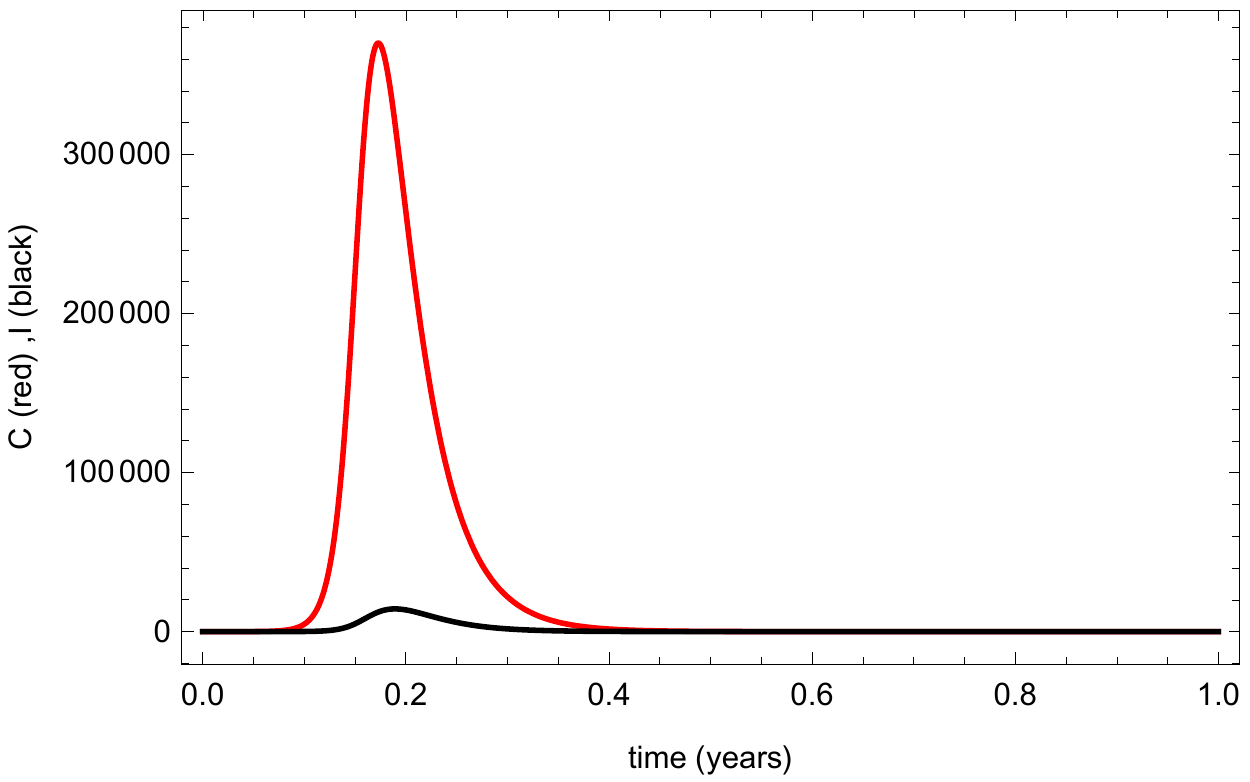}
	 \caption{Simulations of the first year of the spread of the disease, with $\sigma =2.6 years^{-1}$. Left panel: $T=10 days$, right panel: $T=120 days$. Other parameters as in Table \ref{parameters2}.}
	 \label{SimuSigma2p6}
 \end{figure}

\section{Conclusions }
\label{sec: concl}

As far as we know, the effects of information-dependent behavior on meningitis transmission has not been studied before in the context of behavioral epidemiology \cite{MD2013}. Here, we propose a variant of a well established meningitis model, where
an information--dependent vaccination behavior is explicitly taken into account. This is done by considering the information index 
as early proposed in \cite{DMS2007}.

The qualitative analysis of the resulting \emph{behavioral change model} is based on stability and bifurcation theory. These results extend (in some way) the ones obtained in \cite{Blyuss2016,IBC2012,Yaleu2017}, in that: \textit{i)} we included behavioral effects; \textit{ii)} we get a rigorous results concerning the transcritical bifurcation taking place at the threshold $\mathcal{R}_{0}=1$.

The BRN of the model does not depend on the information-related parameters, thus the GAS result concerning the DFE can be read (similarly to the case of the SIR model with behavioral vaccination investigated in \cite{DMS2007}) as an \textit{eradication impossible} result \cite{MD2013,WBB2016}. Indeed, to guarantee the disease eradication the baseline behavior--independent vaccination rate ought to be as great as to guarantee the eradication in absence of the behavioral effects in the vaccination rate. This is, of course, unlikely \cite{MD2013,WBB2016}.

As far as the information effects are concerned, our simulations suggests that, of course, it contributes to reduce the epidemic peak. However, this holds in the case where $T$ is medium--small. At the endemic state, the equilibrium size of the compartment of infectious $I^*$ is a decreasing function of the information coverage and of $b$.

The delay $T$ is able to destabilize the endemic equilibrium by inducing steady state recurrent epidemics. For $\sigma=26$ we found that the destabilization is observed for a windows of values of $T$.

Irving and coworkers \cite{IBC2012} stressed that the parameter $\sigma$, which is the rate of the transfer from the carrier to the infectious state, has major impact in absence of vaccination. We found that $\sigma$ remains of major relevance also in presence of behavior-dependent vaccinations, with specific effects. It does not only deeply modify the epidemic outbreaks (by reducing the peak of $I$ and increasing that of $C$), as it is intuitive since it rules how many carriers do become infectious, but it also significantly impact on the onset of recurrent epidemics. Indeed, our numerical analysis suggest that there is a critical value $\sigma_m$ such that for $\sigma< \sigma_m$ no oscillatory solutions occur, and the size of the above--mentioned `instability window' for $T$ increases with $\sigma$. In other words, there is a nonlinear interplay between the epidemics-related parameter $\sigma$ and the behaviour--related parameter $T$.

This study is very preliminary and specific assumptions have been made to make model (\ref{system2}) more tractable. Furthermore, the obtained qualitative results need to be validated by field data, especially as far as the behavioral aspects are concerned. Future investigations will mainly concern three areas: i) the possible interplay between the important seasonal changes of $\beta$ and also of $\sigma$ stressed in \cite{IBC2012} with the behavioral components we introduced here; ii) the effect of more realistic  non--Erlangian kernels (as the \textit{acquisition--fading kernel} introduced in \cite{domaLibro}); iii) the optimal control of public health strategies aimed at increasing vaccine acceptance.

\vspace{0.5cm}

\textbf{Acknowledgements:} {\small The present work has been performed under the auspices of the Italian National Group for the Mathematical Physics (GNFM) of National Institute for Advanced Mathematics (INdAM). The authors B.B., S.M.K. and  Y.H.W.  gratefully acknowledge the University of Naples Federico II that supported their research through the project entitled `{\em Advanced dynamical systems for the analysis and control of transmission of infectious diseases}' in the framework of `NAASCO', bilateral agreement of scientific cooperation between the University of Naples Federico II and the Addis Ababa University, 2016-2021. B.B. is grateful to the organizing committee of the conference `Dynamical Systems Applied to Biology and Natural Sciences (DSABNS)' for the kind invitation and hospitality (Naples, Italy, February 3-6, 2019). S. M. K. is supported by Botswana International University of Science and Technology (BIUST) through the BIUST Initiation grant.}

\end{document}